\newcommand{\displaycomments}
\newcommand{\note}[1]{\textcolor{red}{\emph{#1}}}
\newcommand{\note}[1]{}
\newcommand{\displayold}
\newcommand{\old}[1]{\textcolor{blue}{{#1}}}
\newcommand{\old}[1]{}
\newcommand{\eye}{I}
\newcommand{\zeros}{\mathrm{O}}
\DeclareMathOperator*{\argmin}{argmin}
\newcommand{\iverson}[1]{[\![#1]\!]}
\newcommand{\E}{\mathop{\mathbb E}}
\renewcommand{\t}[1]{\mathrm{T}#1}
\newcommand{\N}{\mathcal{N}}
\def\z{{\mathbf z}}
\renewcommand{\d}[1]{\;\mathrm{d}#1}
\renewcommand{\t}[1]{\mathrm{T}#1}
\newcommand{\tr}{\operatorname{tr}}
\newcommand{\Var}{\operatorname{Var}}
\newcommand{\var}{\operatorname{Var}}
\newtheorem{lemma}{Lemma}
\newcommand{\ST}{\operatorname{ST}}
\newcommand{\G}{\mathcal{G}}
\newcommand{\pluseq}{\stackrel{+}{=}}
\renewcommand{\d}[1]{\;\mathrm{d}#1}
\renewcommand{\t}[1]{\mathrm{T}#1}
\newcommand{\figs}{figures/}
\newcommand{\STRTVBF}{STF\xspace}
\newcommand{\STRTVBS}{STS\xspace}
\title{ Skew-$t$ Inference with\\ Improved Covariance Matrix Approximation}
\author{Henri Nurminen, Tohid Ardeshiri, Robert Pich\'e, and Fredrik Gustafsson,
\thanks{H. Nurminen and R. Pich\'e are with the Department of Automation Science and Engineering, Tampere University of Technology (TUT), PO Box 692, 33101 Tampere, Finland  (e-mails: henri.nurminen@tut.fi, robert.piche@tut.fi). H. Nurminen receives funding from TUT Graduate School, the Foundation of Nokia Corporation, and Tekniikan edist\"amiss\"a\"ati\"o.}% <-this %
\thanks{T. Ardeshiri and F. Gustafsson are with the Department of Electrical Engineering, Link\"{o}ping University, 58183 Link\"{o}ping, Sweden, (e-mails: tohid@isy.liu.se, fredrik@isy.liu.se). T. Ardeshiri receives funding from  Swedish research council (VR), project scalable Kalman filters.}% <-this % stops a space
}
\begin{document}
\def\figurename{Fig.}
\maketitle

\begin{abstract}
Filtering and smoothing algorithms for linear discrete-time state-space models with skew-$t$ distributed measurement noise are presented. The proposed algorithms improve upon our earlier proposed filter and smoother using the mean field variational Bayes approximation of the posterior distribution to a skew-$t$ likelihood and normal prior. Our simulations show that the proposed variational Bayes approximation gives a more accurate approximation of the posterior covariance matrix than our earlier proposed method. Furthermore, the novel filter and smoother outperform our earlier proposed methods and conventional low complexity alternatives in accuracy and speed.
\end{abstract}

\begin{keywords}
skew $t$, skewness, $t$-distribution, robust filtering, Kalman filter, variational Bayes, RTS smoother, truncated normal distribution
\end{keywords}

%!TEX root = VB-Skewness-rectrunc.tex

\section{Introduction} \label{sec:introduction}

Asymmetric and heavy-tailed noise processes are present in many inference problems. In radio signal based distance estimation \cite{GusGun2005,BorsenChen2009,Kok2015}, for example, obstacles cause large positive errors that dominate over symmetrically distributed errors from other sources \cite{kaemarungsi2012}. The skew $t$-distribution \cite{branco2001,azzalini2003, gupta2003skew} is the generalization of the $t$-distribution that has the modeling flexibility to capture both skewness and heavy-tailedness of such noise processes.  
To exemplify this, Fig.~\ref{fig:2Diidcontours} illustrates the contours of the likelihood function for three independent range measurements where some of the measurements are positive outliers. In this example, skew-$t$, $t$, and normal likelihoods are compared. The skew-$t$ likelihood gives a more realistic spread of the probability mass than the normal and $t$ likelihoods. 
Filtering and smoothing algorithms for linear discrete-time state-space models with skew-$t$ measurement noise using a variational Bayes (VB) method are presented in~\cite{nurminen2015a}. This filter is applied to indoor localization with real ultra-wideband data in \cite{nurminen2015b}. 
This letter proposes improvements to the filter and smoother proposed in \cite{nurminen2015a}. 
Analogous to \cite{nurminen2015a}, the measurement noise is modeled by the skew $t$-distribution, and the proposed filter and smoother use a VB approximation of the posterior. However, the main contributions of this letter are (1) a new factorization of the approximate posterior distribution, (2) the application of an existing method for approximating the statistics of a truncated multivariate normal distribution (TMND), and (3) a proof of optimality for a truncation ordering in approximation of the moments of the TMND. A TMND is a multivariate normal distribution whose support is restricted (truncated) by linear constraints and that is re-normalized to integrate to unity. The aforementioned contributions improve the estimation performance by reducing the covariance underestimation common to most VB inference algorithms~\cite[Chapter 10]{Bishop2007}.
%The proposed filter and smoother are evaluated by numerical pseudorange positioning simulations, where they are compared with the state-of-the-art computationally light algorithms and \new{a PF}.
 To our knowledge, VB approximations have been applied to the skew $t$-distribution only in our work \cite{nurminen2015a, nurminen2015b} and by Wand et al.\ \cite{wand2011}.

%The applications of the skew distributions is not limited to radio signal based localization. In biostatistics~\cite{fruhwirth2010} skewed distributions are used as a tool for handling heterogeneous data involving asymmetric behaviors across subpopulations. 
%In psychiatry~\cite{counsell2010} skew normal distribution is used to model asymmetric data arising from psychiatric research.  
%Further, in economics~\cite{eling2012} skew normal and skew $t$ distributions are used as models for describing claims in property-liability insurance. 
 %More examples describing approaches for analysis and modeling using multivariate skew normal and skew $t$ distributions in
%econometrics and environmetrics are presented in~\cite{marchenko2010phd}. 

%---------------------------------------------------
\begin{figure}
\centering
%\hspace{-34mm}
\includegraphics[width=0.65\columnwidth]{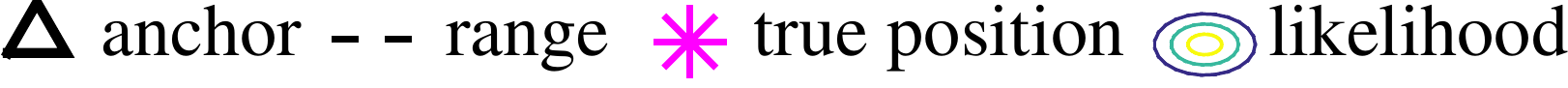}\\\vspace{1mm}
\includegraphics[trim=0mm 0mm 0mm 0mm, clip, width=0.27\columnwidth]{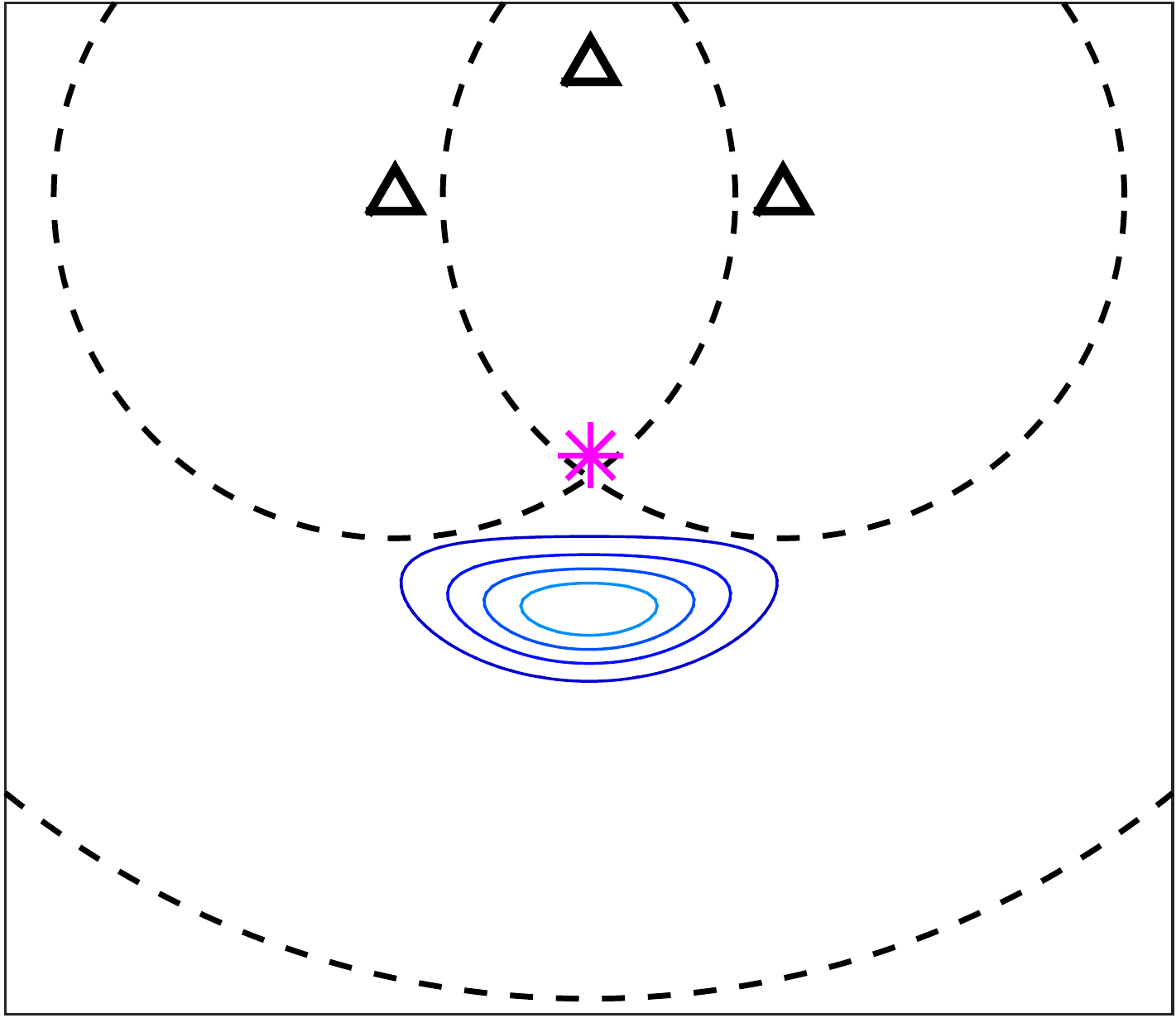}
\includegraphics[trim=0mm 0mm 0mm 0mm, clip, width=0.27\columnwidth]{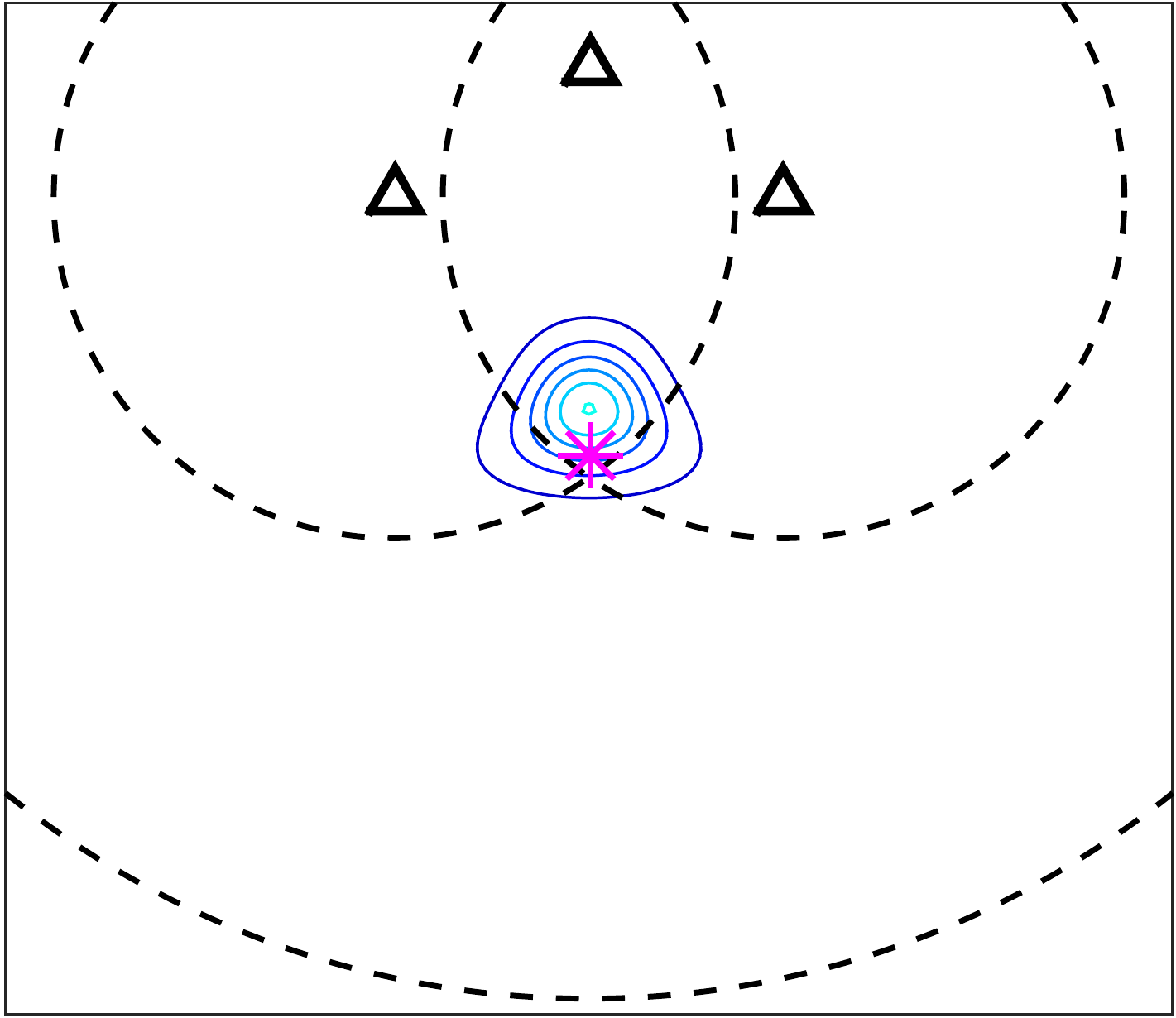}
\includegraphics[trim=0mm 0mm 0mm 0mm, clip, width=0.27\columnwidth]{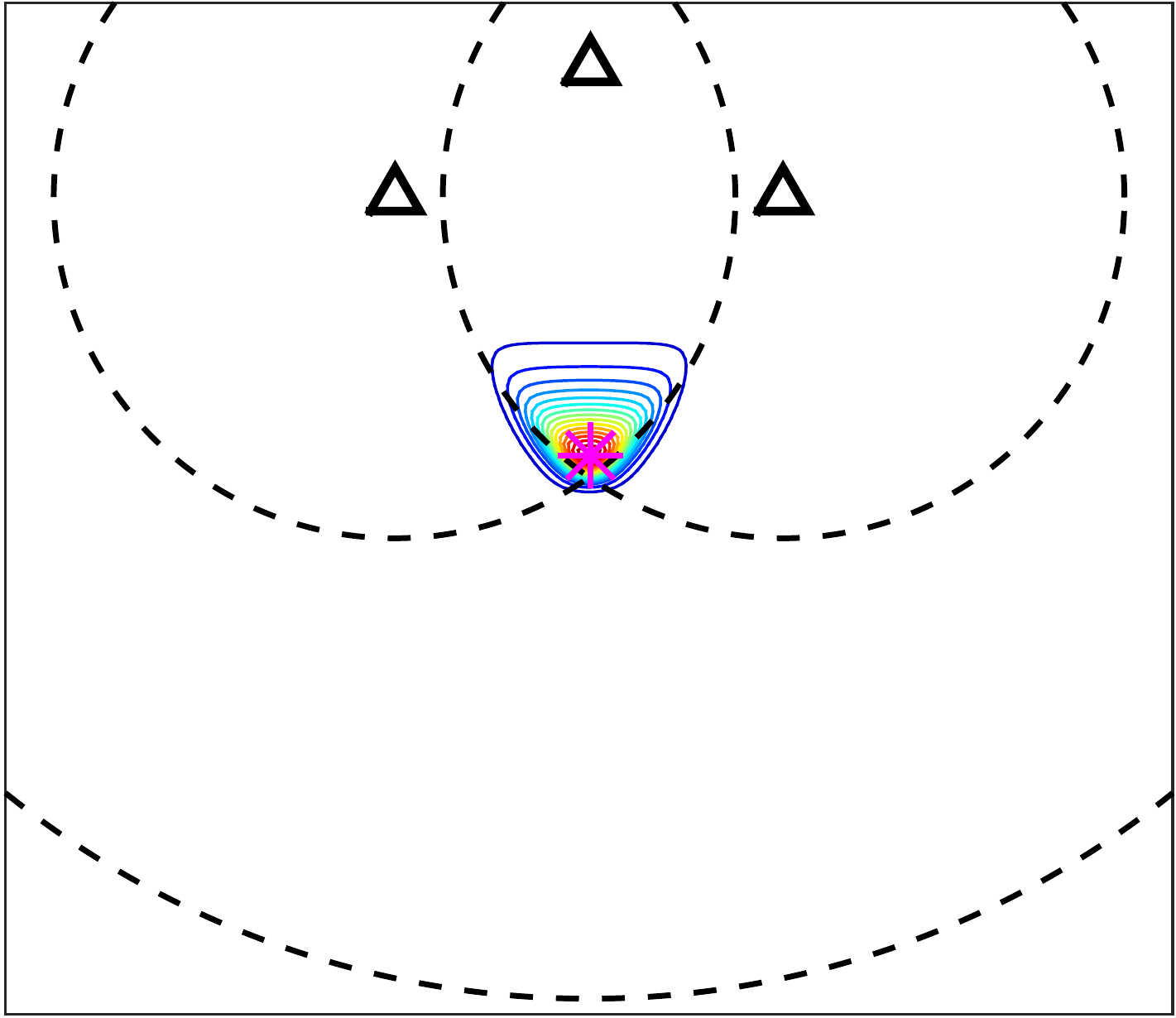}

\vspace{2mm}

\includegraphics[trim=0mm 0mm 0mm 0mm, clip, width=0.27\columnwidth]{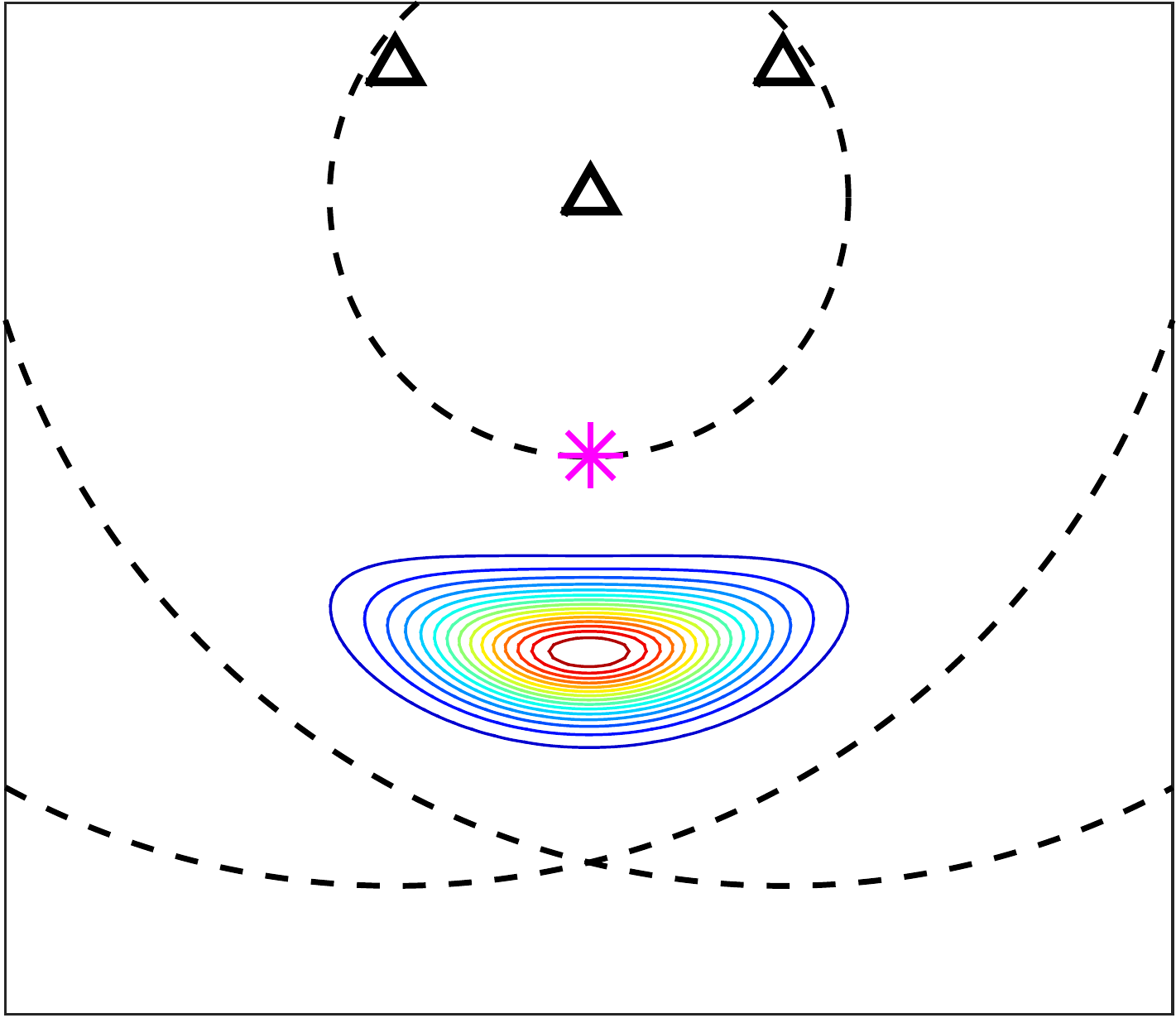}
\includegraphics[trim=0mm 0mm 0mm 0mm, clip, width=0.27\columnwidth]{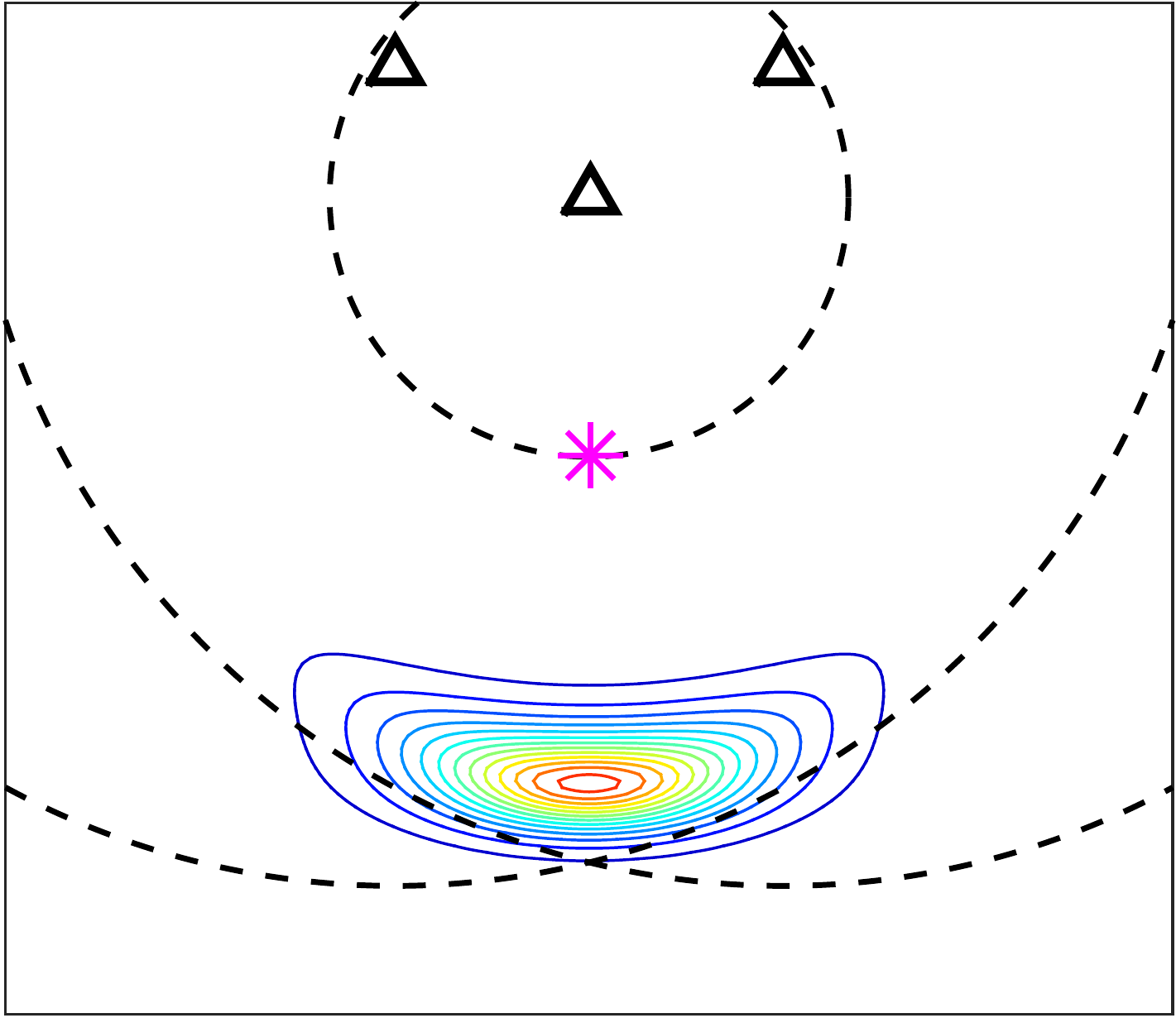}
\includegraphics[trim=0mm 0mm 0mm 0mm, clip, width=0.27\columnwidth]{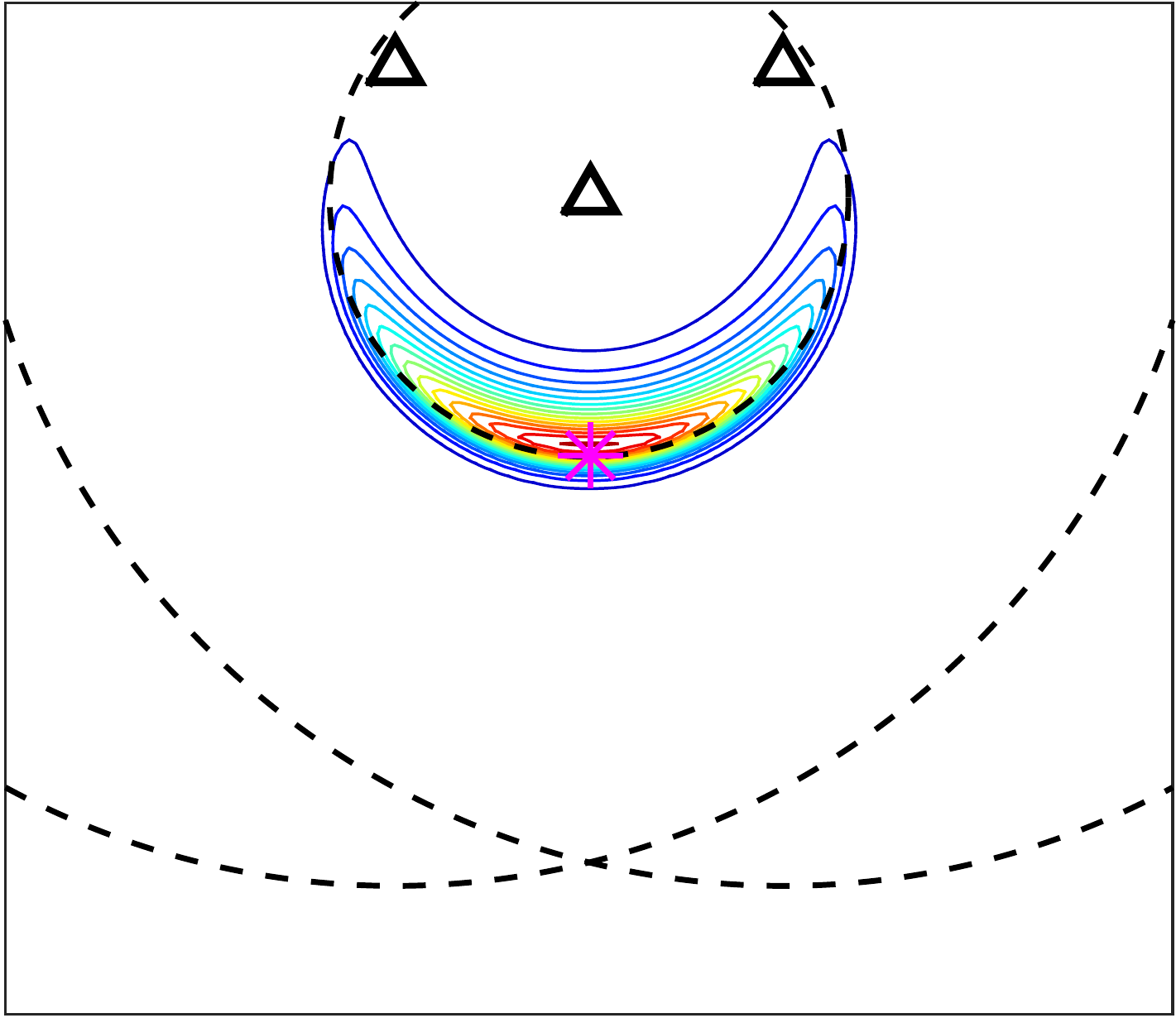}
\caption{The contours of the likelihood function for three range measurements for the normal (left), $t$ (middle) and skew-$t$ (right) measurement noise models are presented. The $t$ and skew-$t$ likelihoods can handle one outlier (upper row), while only the skew-$t$ model can handle the two positive outlier measurements (bottom row) due to its asymmetry. The likelihoods' parameters are selected such that the first two moments of the normal, $t$ and skew-$t$ PDFs coincide. } \label{fig:2Diidcontours}
\end{figure}

\section{Problem formulation} \label{sec:problem_formulation}
Consider the linear and Gaussian state evolution model
\begin{subequations}
\label{eq:state-evo}
\begin{align}
x_{k+1}&=\ Ax_k+w_k, & w_k&\stackrel{\text{iid}}{\sim}\N(0,Q),\\
p(x_1)&=\N(x_1;x_{1|0},P_{1|0}),\label{eq:prior}
\end{align}
\end{subequations}
where 
 $\N(\cdot;\mu,\Sigma)$ denotes a (multivariate) normal PDF with mean $\mu$ and covariance matrix $\Sigma$;
 $A \in \mathbb{R}^{n_x\times n_x}$ is the state transition matrix;
 $x_k\in\mathbb{R}^{n_x}$ indexed by $1\!\leq\! k\!\leq\! K$ is the state to be estimated with initial prior distribution \eqref{eq:prior}, 
 where the subscript ``$a|b$" is read ``at time $a$ using measurements up to time $b$". 
Further, consider the measurements $y_k\in\mathbb{R}^{n_y}$ to be governed by the measurement equation
\begin{align}
y_{k}&= C x_k+e_k, & [e_k]_i&\stackrel{\text{iid}}{\sim} \ST(0,R_{ii},\Delta_{ii},\nu_i), \label{eq:measurementmodel}
\end{align}
where the measurement noise distribution is a product of independent univariate skew $t$-distributions. This  model is justified in applications where one-dimensional data from different sensors can be assumed to have statistically independent noise~\cite{nurminen2015b}. The PDF and the first two moments of the skew $t$-distribution can be found in~\cite{nurminen2015b} and \cite{sahu2009}, respectively.

The model~\eqref{eq:measurementmodel} admits the hierarchical representation 
\begin{subequations}
\label{eq:hierarchical}
\begin{align}
y_k|x_k,u_k,\Lambda_k &\thicksim \N(Cx_k+\Delta u_k,\Lambda_k^{-1}R),\\
u_k|\Lambda_k &\thicksim \N_+(0,\Lambda_k^{-1}),\\
[\Lambda_k]_{ii} &\thicksim \G(\tfrac{\nu_i}{2},\tfrac{\nu_i}{2}),
\end{align}
\end{subequations}
where
 $R\in\mathbb{R}^{n_y\times n_y}$ is a diagonal matrix of which the square roots of the diagonal elements, $\sqrt{R_{ii}}$, are the spread parameters of the skew $t$-distribution in
 \eqref{eq:measurementmodel};
 $\Delta\in\mathbb{R}^{n_y\times n_y}$ is a diagonal matrix whose diagonal elements $\Delta_{ii}$ are the shape parameters;
 $\nu\in\mathbb{R}^{n_y}$ is a vector whose elements $\nu_i$ are the degrees of freedom;
 $C \in \mathbb{R}^{n_y\times n_x}$ is the measurement matrix;
 $\{w_k\in\mathbb{R}^{n_x}| 1\!\leq\! k \!\leq\! K\}$  and $\{e_k\in\mathbb{R}^{n_y}| 1\!\leq\! k \!\leq\! K\}$  are mutually independent noise sequences;  the operator $[\cdot]_{ij}$ gives the $(i,j)$ entry of its argument;
 $\Lambda_k$ is a diagonal matrix with a priori independent random diagonal elements $[\Lambda_k]_{ii}$. %Therefore, $p(\Lambda_k) =\prod_{i=1}^{n_y} \G(\frac{\nu_i}{2},\frac{\nu_i}{2})$.
 Also, $\N_+(\mu,\Sigma)$ is the TMND with closed positive orthant as support, location parameter $\mu$, and squared-scale matrix $\Sigma$. Furthermore, $\G(\alpha, \beta)$ is the gamma distribution with shape parameter $\alpha$ and rate parameter $\beta$. Models where the measurement noise components are vector-valued with independently multivariate skew-$t$ distributed noises \cite{branco2001,azzalini2003, gupta2003skew,sahu2003,lin2010,lee2013} require only a straightforward modification to the update of the approximate posterior of $\Lambda_k$ in the proposed filtering and smoothing algorithms.

Bayesian smoothing means finding the smoothing posterior $p(x_{1:K},u_{1:K},\Lambda_{1:K}|y_{1:K})$. 
In \cite{nurminen2015a}, the smoothing posterior  is approximated by a factorized distribution of the form $q_{\text{\cite{nurminen2015a}}}\!\triangleq\! q_x(x_{1:K})q_u(u_{1:K})q_\Lambda(\Lambda_{1:K})$. Subsequently, the approximate posterior distributions are computed using the VB approach. The VB approach minimizes the Kullback--Leibler divergence (KLD) $D_{\text{KL}}(q||p)\!\triangleq\!\int q(x)\log\frac{q(x)}{p(x)}\!\d x$~\cite{CoverT2006} of the true posterior from the factorized approximation. 
That is, $D_{\text{KL}}(q_{\text{\cite{nurminen2015a}}}||p(x_{1:K},u_{1:K},\Lambda_{1:K}|y_{1:K}))$ is minimized in \cite{nurminen2015a}. 

The numerical simulations in \cite{nurminen2015a} manifest the covariance underestimation of the VB approach, which is a known weakness of the method \cite[Chapter 10]{Bishop2007}.    
The aim of this letter is to reduce the covariance underestimation of the filter and smoother proposed in \cite{nurminen2015a} by removing independence approximations of the posterior approximation.
%The new Bayesian filter and a Bayesian smoother use the VB method to computes  approximations of the filtering distribution $p(x_k|y_{1:k})$ and smoothing distribution $p(x_k|y_{1:K})$, respectively. 

%The algorithm uses fixed point iteration with guaranteed convergence to a local optimum. 
%For the signal model \eqref{eq:statemodel}, the Bayesian filter and smoother
%prediction distribution $p(x_k|y_{1:k-1})$ can be obtained analytically when $p(x_{k-1}|y_{1:k-1})$ is Gaussian,  but the measurement update to obtain the filter distribution $p(x_k|y_{1:k})$ cannot be done exactly. The aim of this letter is to derive an update algorithm that computes an approximation of the filter distribution. The algorithm uses fixed point iteration with guaranteed convergence to a local optimum. 
%!TEX root = VB-Skewness-rectrunc.tex

\section{Proposed Solution} \label{sec:solution}
Using Bayes' theorem, the state evolution model~\eqref{eq:state-evo}, and the likelihood \eqref{eq:hierarchical},  the joint smoothing posterior PDF can be derived as in~\cite{nurminen2015a}. 
This posterior is not analytically tractable. We propose to seek an approximation in the form
\begin{align}
\label{eq:factors}
p(x_{1:K},&u_{1:K},\Lambda_{1:K}|y_{1:K})\approx q_{xu}(x_{1:K},u_{1:K})\,q_\Lambda(\Lambda_{1:K}) ,
\end{align}
where the factors in \eqref{eq:factors} are specified by
\begin{align}
&\hat{q}_{xu},\hat{q}_{\Lambda}=\argmin_{{q}_{xu},{q}_{\Lambda}}D_{\text{KL}}(q_\text{N}||p(x_{1:K},u_{1:K},\Lambda_{1:K}|y_{1:K}))\nonumber
\end{align}
and where $q_\text{N}\!\triangleq\! q_{xu}(x_{1:K},u_{1:K})q_\Lambda(\Lambda_{1:K})$. Hence, $x_{1:K}$ and $u_{1:K}$ are not approximated as independent as in \cite{nurminen2015a} because they can be highly correlated a posteriori \cite{nurminen2015a}. The analytical solutions for $\hat{q}_{xu}$ and $\hat{q}_\Lambda$ are obtained by cyclic iteration of
\small{
\begin{subequations}
\label{eqn:IterativeOptimization}
\begin{align}
\log {q}_{xu}(\cdot) &\leftarrow \E_{{q}_{\Lambda}}[\log p(y_{1:K},x_{1:K},u_{1:K},\Lambda_{1:K})]+c_{xu}\label{eqn:IterativeOptimizationxu}\\
\log {q}_{\Lambda}(\cdot) &\leftarrow \E_{{q}_{xu}}[\log p(y_{1:K},x_{1:K},u_{1:K},\Lambda_{1:K})]+c_{\Lambda}\label{eqn:IterativeOptimizationL}
\end{align}
\end{subequations}}\normalsize
where the expected values on the right hand sides are taken with respect to the current $q_{xu}$ and $q_\Lambda$~\cite[Chapter 10]{Bishop2007}\cite{TzikasLG2008,Beal03}. Also, $c_{xu}$ and $c_\Lambda$  are constants with respect to the variables $(x_{1:K},u_{1:K})$ and $\Lambda_{1:K}$,  respectively.

%===========================================================================
%\subsection{Approximation of TMND statistics}

Computation of the expectation in~\eqref{eqn:IterativeOptimizationL} requires the first two moments of a TMND, because the support of $u_{1:K}$ is the non-negative orthant.
  These moments can be computed using the formulas presented in \cite{tallis1961}. They require evaluating the CDF (cumulative distribution function) of general multivariate normal distributions. The \textsc{Matlab} function \verb+mvncdf+ implements the numerical quadrature  of \cite{genz2004} in $2$ and $3$ dimensional cases and the quasi-Monte Carlo method of \cite{Genz02} for the dimensionalities $4-25$. However, these methods can be prohibitively slow. Therefore, we approximate the TMND's moments using the fast recursive algorithm suggested in \cite{perala2008,simon2010}. The method is initialized with the original normal density whose parameters are then updated by applying one linear constraint at a time. For each constraint, the mean and covariance matrix of the once-truncated normal distribution are computed analytically, and the once-truncated distribution is approximated by a non-truncated normal with the updated moments.
 
%\textcolor[rgb]{1,0,0}{Because the required normal approximation is most accurate in the parts of support where the once-truncated normal density is highest}, 
%\note{When approximating a TMND with a multivariate normal distribution, the approximation will be crudest at the tails of the distribution. To improve the future approximations,} 
%\note{To move probability mass closer to the true mean already in the beginning of the recursion,}
The result of the recursive truncation depends on the order in which the constraints are applied. Finding the optimal order of applying the truncations is a combinatorial problem. Hence, we choose a greedy approach, %and apply the optimal truncation among the remaining truncations sequentially. We choose the next applied constraint greedily 
whereby the constraint to be applied is chosen from among the remaining constraints so that the resulting once-truncated normal is closest to the true TMND. By Lemma~\ref{lem:optimal-seq}, the optimal constraint in KLD-sense is the one that truncates the most probability. The obtained algorithm with the optimal processing sequence for computing the mean and covariance of a given normal distribution truncated to the positive orthant is given in Table~\ref{table:recursive}.
%===========================================================================
\begin{lemma} \label{lem:optimal-seq}
Let $p(\z)$ be a {TMND} with the support $\{\z \geq 0\}$ and $q(\z)=\N(\z;\mu,\Sigma)$. Then,
\begin{align}
\argmin_i D_{\normalfont{\text{KL}}}\left(p(\z)\,\big| \big|\,\tfrac{1}{c_i}q(\z)\iverson{\z_i \geq 0}\right) = \argmin_i \tfrac{\mu_i}{\sqrt{\Sigma_{ii}}} ,
\end{align} 
where $\mu_i$ is the $i$th element of $\mu$, $\Sigma_{ii}$ is the $i$th diagonal element of $\Sigma$, $\iverson{\cdot}$ is the Iverson bracket, and $c_i\!=\! \int{q(\z)\iverson{\z_i \geq 0}}\d\z$.
\begin{align*}
\text{Proof: }& D_{\normalfont{\text{KL}}}\left(p(\z)\,\big| \big|\,\tfrac{1}{c_i}q(\z)\iverson{\z_i \geq 0}\right)\nonumber\\
&\pluseq -\int_0^\infty{p(\z)\log(\tfrac{1}{c_i}q(\z)\iverson{\z_i \geq 0})\d\z}\nonumber\\
&= \log{c_i} - \int_0^\infty{p(\z)\log q(\z)\d\z} - 1 \pluseq \log{c_i} ,
\end{align*} 
where $\pluseq$ means equality up to an additive constant. 
Since  $c_i$ is an increasing function of $\frac{\mu_i}{\sqrt{\Sigma_{ii}}}$ the proof follows. \vspace{-2ex} \begin{flushright}$\blacksquare$\end{flushright}%\hfill$\blacksquare$
\end{lemma}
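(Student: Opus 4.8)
The plan is to expand the Kullback--Leibler divergence and exploit the fact that the only object varying with $i$ is the scalar $c_i$. First I would write, for a fixed coordinate $i$,
\[
D_{\mathrm{KL}}\!\left(p(\z)\,\big\|\,\tfrac{1}{c_i}q(\z)\iverson{\z_i\ge 0}\right)
= \int_{\z\ge 0} p(\z)\,\log\frac{p(\z)}{\tfrac{1}{c_i}q(\z)\iverson{\z_i\ge 0}}\d\z ,
\]
using that $p$ is supported on $\{\z\ge 0\}$. On that support the Iverson bracket $\iverson{\z_i\ge 0}$ is identically $1$, so the logarithm splits as $\log p(\z) - \log q(\z) + \log c_i$, and since $p$ integrates to $1$ this gives the clean identity $D_{\mathrm{KL}}(p\,\|\,\tfrac{1}{c_i}q\,\iverson{\z_i\ge0}) = D_{\mathrm{KL}}(p\,\|\,q) + \log c_i$.

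Next I would observe that $D_{\mathrm{KL}}(p\,\|\,q)$ does not depend on $i$ --- the same normal $q$ is used for every candidate truncation --- so $\argmin_i D_{\mathrm{KL}}(p\,\|\,\tfrac{1}{c_i}q\,\iverson{\z_i\ge0}) = \argmin_i \log c_i = \argmin_i c_i$. It then remains to identify $c_i$: since $c_i = \int q(\z)\iverson{\z_i\ge 0}\d\z$ is the probability that the $i$th coordinate of an $\N(\mu,\Sigma)$ vector is nonnegative, and that coordinate is marginally $\N(\mu_i,\Sigma_{ii})$, we get $c_i = \Phi(\mu_i/\sqrt{\Sigma_{ii}})$, where $\Phi$ denotes the standard normal CDF. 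Because $\Phi$ is strictly increasing, $\argmin_i c_i = \argmin_i \mu_i/\sqrt{\Sigma_{ii}}$, which is the claim; intuitively this is the coordinate along which $q$ places the least mass in the feasible half-space, \ie the truncation that removes the most probability.

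There is essentially no hard step here; the only thing requiring care is bookkeeping of the $i$-dependence. The two points to get right are (i) that $\iverson{\z_i\ge0}$ equals $1$ on the entire support of $p$, so it contributes nothing to the cross term and only its normalizer $c_i$ survives, and (ii) that the reference density $q$ is the \emph{same} fixed normal in all $n_y$ comparisons, so the $\int_{\z\ge0}p\log q$ term (hence also $D_{\mathrm{KL}}(p\,\|\,q)$) is a genuine constant in the minimization rather than something that must be bounded. Writing the divergence directly as $D_{\mathrm{KL}}(p\,\|\,q)+\log c_i$ makes both facts transparent and reduces the lemma to the monotonicity of $t\mapsto\Phi(t)$.
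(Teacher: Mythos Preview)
Your proof is correct and follows essentially the same route as the paper: both reduce the KL divergence to $\log c_i$ plus an $i$-independent term and then invoke monotonicity in $\mu_i/\sqrt{\Sigma_{ii}}$. Your version is slightly more explicit in identifying $c_i=\Phi(\mu_i/\sqrt{\Sigma_{ii}})$ via the marginal, whereas the paper just asserts that $c_i$ is increasing in that ratio, but the argument is the same.
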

%

%=============================================================================
\begin{table}
\caption{Optimal Recursive Truncation to the Positive Orthant}\label{table:recursive}
\vspace{-5mm}\rule{\columnwidth}{1pt}%\vspace{0mm}
\begin{algorithmic}[1]
%------------------------------
\State \textbf{Inputs:} $\mu$, $\Sigma$, and the set of the truncated components' indices $\mathcal{T}$
\While{$\mathcal{T} \neq \emptyset$}
	\State $k \gets \argmin_i\{ \mu_i/\sqrt{\Sigma_{ii}} \mid i \in \mathcal{T} \}$
	\State $\xi \gets \mu_k / \sqrt{\Sigma_{kk}}$
	\If{$\Phi(\xi)$ does not underflow to $0$}
		\State $\epsilon \gets \phi(\xi) / \Phi(\xi)$ \Comment{$\phi$ is the PDF of $\N(0,1)$, $\Phi$ its CDF}
		\State $\mu \gets \mu + (\epsilon/\sqrt{\Sigma_{kk}}) \cdot \Sigma_{:,k}$
		\State $\Sigma \gets \Sigma - ((\xi\epsilon+\epsilon^2)/\Sigma_{kk}) \cdot \Sigma_{:,k}\Sigma_{k,:}$
	\Else
		\State $\mu \gets \mu + (-\xi/\sqrt{\Sigma_{kk}}) \cdot \Sigma_{:,k}$ \Comment{$\lim_{\xi\rightarrow-\infty}(\epsilon+\xi)=0$ \cite{perala2008}}
		\State $\Sigma \gets \Sigma - (1/\Sigma_{kk}) \cdot \Sigma_{:,k}\Sigma_{k,:}$ \Comment{$\lim_{\xi\rightarrow-\infty}(\xi\epsilon+\epsilon^2)=1$ \cite{perala2008}}
	\EndIf
	\State $\mathcal{T} \gets \mathcal{T} \backslash \{k\}$
\EndWhile
\State \textbf{Outputs: $\mu$ and  $\Sigma$;} ($[\mu,\Sigma] \gets \texttt{rec\_trunc}(\mu,\Sigma,\mathcal{T}$))
%------------------------------
\end{algorithmic}
\noindent \rule{\columnwidth}{1pt}\vspace{0mm}
\end{table}
%=============================================================================

%===========================================================================
The recursion~\eqref{eqn:IterativeOptimization} is convergent to a local optimum~\cite[Chapter 10]{Bishop2007}. However, there is no proof of convergence available when the moments of the TMND are approximated. In spite of lack of a convergence proof the iterations did not diverge in the numerical simulations presented in section~\ref{sec:simulations}. 
%Furthermore, computing the posterior in~\eqref{eqn:IterativeOptimizationxu} requires a smoother for a linear and Gaussian state-space model where some elements of the state variable are restricted to be non-negative. 

The  derivations for the expectations of \eqref{eqn:IterativeOptimization} are presented in the appendixes. In the smoother, the update \eqref{eqn:IterativeOptimizationxu} includes a forward filtering step of the Rauch--Tung--Striebel smoother (RTSS) \cite{RTS-1965} where the first filtering posterior is a TMND. The TMND is approximated as a multivariate normal distribution %$\N(\left[\begin{smallmatrix}x_k\\u_k\end{smallmatrix}\right]; z_{k|K},Z_{k|K})$ 
whose parameters are obtained using the recursive truncation. This approximation enables recursive forward filtering and the use of RTSS's backward smoothing step that gives normal approximations to the marginal smoothing posteriors $q_{xu}(x_k, u_k) \!\approx\! \N(\left[\begin{smallmatrix}x_k\\u_k\end{smallmatrix}\right]; z_{k|K},Z_{k|K})$. After the iterations converge, the variables $u_{1:K}$ are integrated out to get the approximate smoothing posteriors $q_x(x_k)\!=\!\N(x_k;x_{k|K},P_{k|K})$, where the parameters $x_{k|K}$ and $P_{k|K}$ are the output of the skew $t$ smoother (STS) algorithm in Table~\ref{table:smoothing}. STS can be restricted to an online recursive algorithm to synthesize a filter which is summarized in Table~\ref{table:filtering}. In the filter, the output of a filtering step is also a TMND which in analogy to STS is approximated by a multivariate normal distribution to have a recursive algorithm. Using recursive truncation, the TMND is approximated by a normal distribution $q_x(x_k)\!=\!\N(x_k;x_{k|k},P_{k|k})$ whose parameters are the outputs of the skew $t$ filter (STF) algorithm in Table~\ref{table:filtering}. %Multivariate normal posterior approximation enables recursive filtering.

%=============================================================================
\begin{table}
\caption{Smoothing for skew-$t$ measurement noise}\label{table:smoothing}
\vspace{-5mm}\rule{\columnwidth}{1pt}%\vspace{0mm}
\newcommand{\mlambda}[1]{\Lambda_{#1|K}}
\begin{algorithmic}[1]
%------------------------------
\State \textbf{Inputs:} $A$, $C$, $Q$, $R$, $\Delta$,  $\nu$, $x_{1|0}$,  $P_{1|0}$ and $y_{1:K}$
\State $A_z \gets  \left[\begin{smallmatrix}A&0\\0&0\end{smallmatrix}\right]$, $C_z \gets \left[\begin{smallmatrix}C&\Delta\end{smallmatrix}\right]$
\Statex \hspace{0mm}\textit{initialization}
\State $\mlambda{k} \gets I_{n_y}$ for $k=1\cdots K$
\Repeat
\Statex \hspace{2mm}\textit{update $q_{xu}(x_{1:K},u_{1:K})$ given $q_\Lambda(\Lambda_{1:K})$}
	\For{$k$ = 1 to $K$}
		\State $Z_{k|k-1} \gets \mathrm{blockdiag}(P_{k|k-1}, \mlambda{k}^{-1})$%\left[\begin{smallmatrix}P_{k|k-1}&\mathrm{O}\\\mathrm{O}&\mlambda{k}^{-1}\end{smallmatrix}\right]$
		\State $K_{z} \gets Z_{k|k-1} C_z^\t (CP_{k|k-1}C^\t\!+\!\Delta\mlambda{k}^{-1}\Delta^\t\!+\!\mlambda{k}^{-1}R)^{-1}$
		\State $\widetilde{z}_{k|k} \gets \left[\begin{smallmatrix} x_{k|k-1}\\0 \end{smallmatrix}\right]+K_z(y_k-Cx_{k|k-1})  $
		\State $\widetilde{Z}_{k|k}\gets (I-K_z C_z)P_{k|k-1}$
		\State $[z_{k|k}, Z_{k|k}] \gets \texttt{rec\_trunc}(\widetilde{z}_{k|k}, \widetilde{Z}_{k|k}, \{n_x+1\cdots n_x+n_y\})$
		\State $x_{k|k} \gets [z_{k|k}]_{1:n_x},\ P_{k|k} \gets [Z_{k|k}]_{1:n_x,1:n_x}$
    		\Statex \vspace{-0.9ex}%\textit{predict $q_x(x_{k+1})$}
	 	\State $x_{k+1|k} \gets Ax_{k|k}$
	  	\State $P_{k+1|k} \gets AP_{k|k}A^\t+Q$	
	\EndFor
	\For{$k$ = $K-1$ down to $1$ }
		\State $G_k\gets Z_{k|k} A_z Z_{k+1|k}^{-1}$
		\State $z_{k|K}\gets z_{k|k}+G_k(z_{k+1|K}-A_z z_{k|k})$
		\State $Z_{k|K}\gets Z_{k|k}+G_k(Z_{k+1|K}-Z_{k+1|k})G_k^\t$
		\State $x_{k|K} \gets [z_{k|K}]_{1:n_x},\ P_{k|K} \gets [Z_{k|K}]_{1:n_x,1:n_x}$
		\State $u_{k|K} \!\!\gets\!\! [z_{k|K}]_{n_x+(1:n_y)}, U_{k|K} \!\!\gets\!\! [Z_{k|K}]_{n_x+(1:n_y),n_x+(1:n_y)}$
	\EndFor
	\Statex \hspace{2mm}\textit{update $q_\Lambda(\Lambda_{1:K})$ given $q_{xu}(x_{1:K},u_{1:K})$}
	\For{$k$ = $1$ to $K$}
		\State $\Psi_k \gets (y_k-C_z z_{k|K})(y_k-C_z z_{k|K})^\t R^{-1} + C_z Z_{k|K} C_z^\t R^{-1}$
		\Statex $\hspace{1.5cm}  +u_{k|K}u_{k|K}^\t + U_{k|K}$
		\State $[\mlambda{k}]_{ii} \gets\frac{\nu_i+2}{\nu_i+[\Psi_k]_{ii}}$
	\EndFor	
\Until{\textbf{converged}}
\State \textbf{Outputs: $x_{k|K}$ and  $P_{k|K}$ for $k=1\cdots K$ } 
%------------------------------
\end{algorithmic}
\noindent \rule{\columnwidth}{1pt}\vspace{0mm}
\end{table}
%=============================================================================
%=============================================================================
\begin{table}[t]
\caption{Filtering for skew-$t$ measurement noise}\label{table:filtering}
\vspace{-5mm}\rule{\columnwidth}{1pt}%\vspace{0mm}
\newcommand{\mlambda}[1]{\Lambda_{#1|k}}
\begin{algorithmic}[1]
%------------------------------
\State \textbf{Inputs:} $A$, $C$, $Q$, $R$, $\Delta$,  $\nu$, $x_{1|0}$,  $P_{1|0}$ and $y_{1:K}$
\State $C_z \gets \left[\begin{smallmatrix}C&\Delta\end{smallmatrix}\right]$
\For{$k$ = 1 to $K$}
	\Statex \hspace{2mm}\textit{initialization}
	\State $\mlambda{k} \gets I_{n_y}$
	\Repeat
		\Statex \hspace{6mm}\textit{update $q_{xu}(x_k,u_k)=\N(\left[\begin{smallmatrix}x_k\\u_k\end{smallmatrix}\right];z_{k|k},Z_{k|k})$ given $q_\Lambda(\Lambda_k)$}
\State $Z_{k|k-1} \gets \mathrm{blockdiag}(P_{k|k-1}, \mlambda{k}^{-1})$
		\State $K_{z} \gets Z_{k|k-1} C_z^\t (CP_{k|k-1}C^\t\!+\!\Delta\mlambda{k}^{-1}\Delta^\t\!+\!\mlambda{k}^{-1}R)^{-1}$
		\State $\widetilde{z}_{k|k} \gets \left[\begin{smallmatrix} x_{k|k-1}\\0 \end{smallmatrix}\right]+K_z(y_k-Cx_{k|k-1})  $
		\State $\widetilde{Z}_{k|k}\gets (I-K_z C_z)P_{k|k-1}$
		\State $[z_{k|k}, Z_{k|k}] \gets \texttt{rec\_trunc}(\widetilde{z}_{k|k}, \widetilde{Z}_{k|k}, \{n_x+1\cdots n_x+n_y\})$
		\State $x_{k|k} \gets [z_{k|k}]_{1:n_x},\ P_{k|k} \gets [Z_{k|k}]_{1:n_x,1:n_x}$
		\State $u_{k|k} \!\!\gets\!\! [z_{k|k}]_{n_x+(1:n_y)}, U_{k|k} \!\!\gets\!\! [Z_{k|k}]_{n_x+(1:n_y),n_x+(1:n_y)}$
		\Statex \hspace{6mm}\textit{update $q_\Lambda(\Lambda_k)=\prod_{i=1}^{n_y}\G\left([\Lambda_k]_{ii};\frac{\nu_i}{2}+1,\frac{\nu_i+[\Psi_k]_{ii}}{2}\right)$ }
		\Statex \hspace{6mm}\textit{ given $q_{xu}(x_k,u_k)$}
		\State $\Psi_k \gets (y_k-C_z z_{k|k})(y_k-C_z z_{k|k})^\t R^{-1} + C_z Z_{k|k} C_z^\t R^{-1}$
		\Statex $\hspace{1.5cm}  +u_{k|k}u_{k|k}^\t + U_{k|k}$
		\State $[\mlambda{k}]_{ii} \gets\frac{\nu_i+2}{\nu_i+[\Psi_k]_{ii}}$
		\Until{\textbf{converged}}
  %\Statex \hspace{2mm}\textit{predict $q_x(x_{k+1})$}
	\State $x_{k+1|k} \gets Ax_{k|k}$
	\State $P_{k+1|k} \gets AP_{k|k}A^\t+Q$	
\EndFor
\State \textbf{Outputs: $x_{k|k}$ and  $P_{k|k}$ for $k=1\cdots K$ } 
%------------------------------
\end{algorithmic}
\noindent \rule{\columnwidth}{1pt}\vspace{0mm}
\end{table}
%=============================================================================

%!TEX root = VB-Skewness-rectrunc.tex

\section{Simulations} \label{sec:simulations}

Our numerical simulations use satellite navigation pseudorange measurements of the model
\begin{equation} \label{eq:measmodel}
[y_k]_i\mid x_k \sim \mathrm{ST}(\left\| s_i - [x_k]_{1:3} \right\| + [x_k]_4, 1\,\text{m}, \delta\,\text{m}, 4)
\end{equation}
where $s_i$ is the $i$th satellite's position, $[x_k]_4$ is bias with prior $\N(0,(0.75\,\text{m})^2)$, and $\delta$ is skewness parameter. The linearization error is negligible because the satellites are far. The state model is a random walk with process covariance $Q\! =\! \mathrm{diag}( (q\,\text{m})^2, (q\,\text{m})^2, (0.2\,\text{m})^2, 0)$, where $q$ is a parameter. A satellite constellation of Global Positioning System provided by the International GNSS service \cite{dow2009} is used with 8 measured satellites. The RMSE is computed for $[x_k]_{1:3}$.

\subsection{Computation of TMND statistics}

In this subsection we study the computation of the moments of the untruncated components of a TMND. One state and one measurement vector per Monte Carlo replication are generated from the model \eqref{eq:measmodel} with $\nu\!=\!\infty$ degrees of freedom (corresponding to skew-normal likelihood), prior $x\!\sim\!\N(0,\mathrm{diag}(\rho\,\text{m}^2,\rho\,\text{m}^2,(0.22\,\text{m})^2,(0.1\,\text{m})^2))$, and 10\,000 replications. The compared methods are recursive truncations with the optimal truncation order (RTopt) and with random order (RTrand), the variational Bayes (VB), and the analytical formulas of \cite{tallis1961} using \textsc{Matlab} function \verb+mvncdf+ (MVNCDF). In RTrand any of the non-optimal constraints is chosen at each truncation.
VB is an update of the skew $t$ variational Bayes filter (STVBF) \cite{nurminen2015a} where $\overline{\Lambda_k}\!=\!\eye$ and the VB iteration is terminated when the position estimate changes less than 0.005\,m or at the 1000th iteration.

Fig.\ \ref{fig:mvncdftest} shows distributions of the distance from the estimate of the bootstrap particle filter (PF) with 100\,000 samples. The box levels are 5\,\%, 25\,\%, 50\,\%, 75\,\%, and 95\,\% quantiles and the asterisks show minimum and maximum values. With small $\rho$ and $\delta$ the differences between RTrand, RTopt, and MVNCDF are small. With large $\delta$ there are statistically significant differences as the $p$-values of two-sided Wilcoxon signed rank test in Fig.\ \ref{fig:mvncdftest} show. %This can be explained by large posterior correlations between $x$ and $u$ caused by weak priors for $Cx$ and $\Delta u$ and the fact that only their linear combination is measured. 
RTopt outperforms RTrand in the cases with high skewness, which reflects the result of Lemma~\ref{lem:optimal-seq}. MVNCDF is more accurate than RTopt in the cases with high skewness, but MVNCDF's computational load is roughly 40\,000 times that of the RTopt. This justifies the use of recursive truncation approximation.

The approximation of the posterior covariance matrix is tested by studying the normalized estimation error squared (NEES) values \cite[Ch.\ 5.4.2]{bar-shalom} shown by Fig. \ref{fig:neestest}. If the covariance matrix is correct, the expected value of NEES is the state dimensionality 3 \cite[Ch.\ 5.4.2]{bar-shalom}. VB gets large NEES values when $\delta$ is large, which indicates that VB underestimates the covariance matrix. RTopt and RTrand give NEES values closest to 3, so the recursive truncation provides the most accurate covariance matrix approximation.
%shows the NEES values for the two-sided $\chi^2$ test where hypothesis $H_0$ is that the true positions are independent samples of the considered normal posterior approximation. RTopt gets high $p$-values with all the tested parameter values, so the data strongly support $H_0$. RTrand and MVNCDF get small $p$-values with some parameters, which indicates that the covariance matrices do not match with the realized errors.

\begin{figure}[t]
%\centering
\includegraphics[width=0.537 \columnwidth]{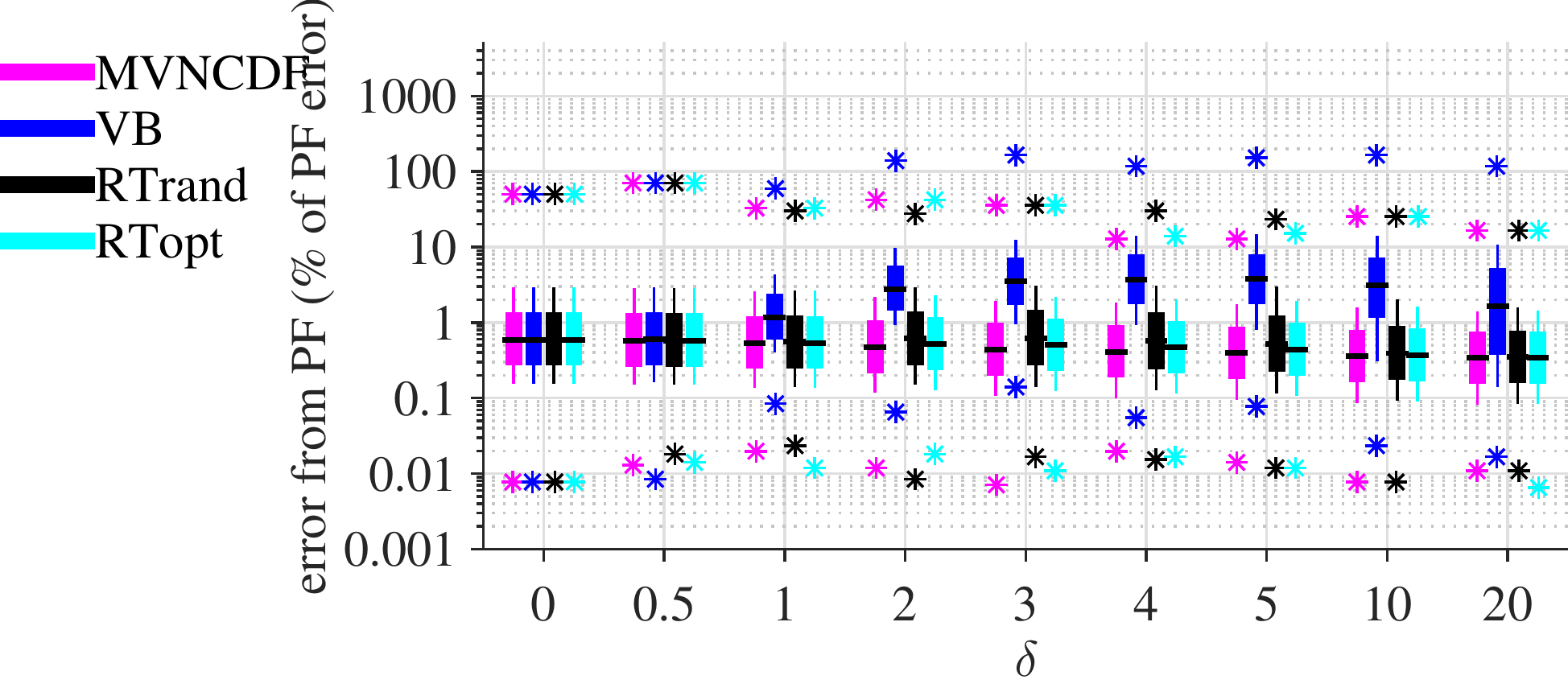}
\hfil
\includegraphics[width=0.437 \columnwidth]{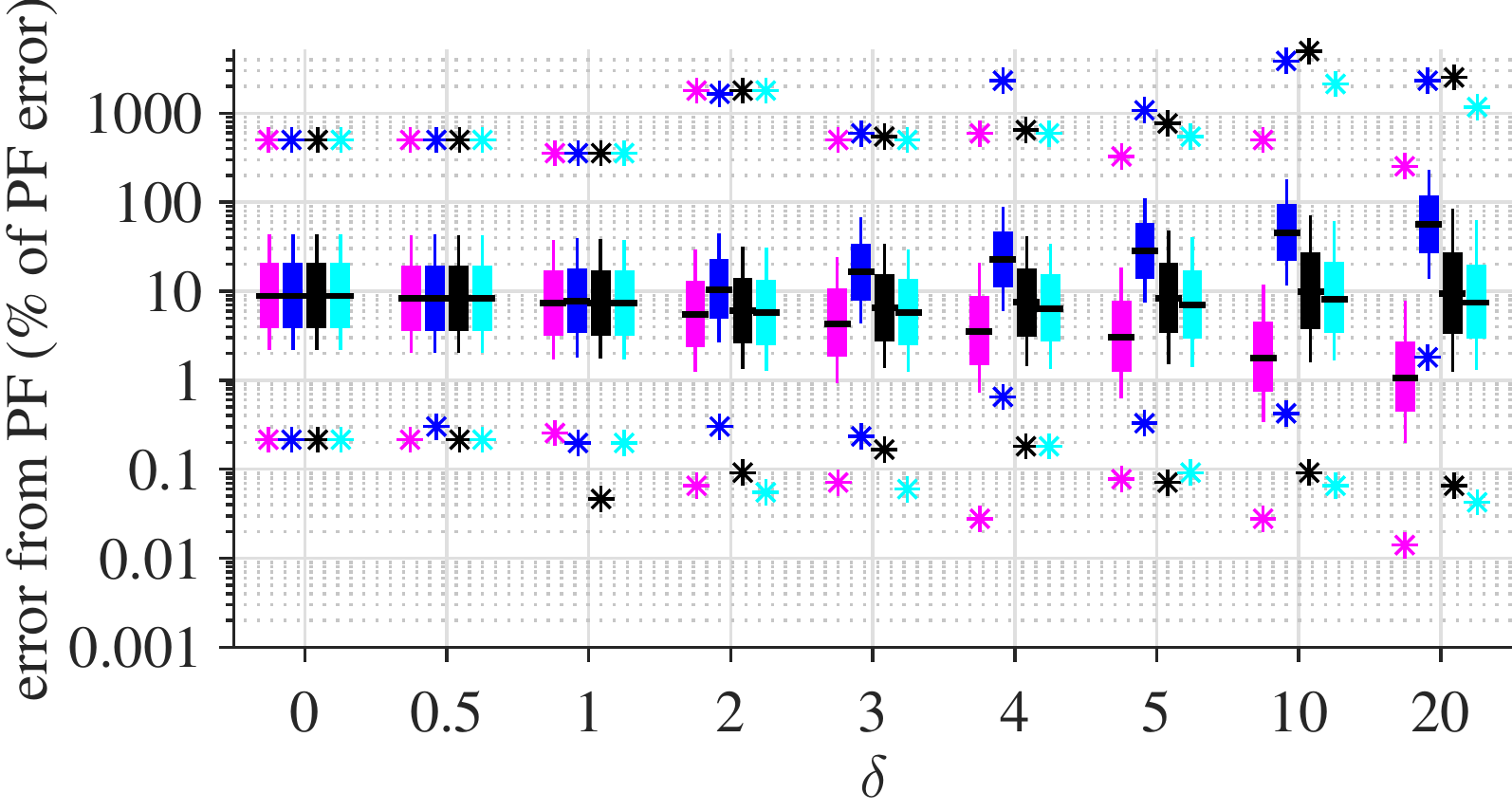}
\\
%\hspace{1ex}
\includegraphics[width=0.537 \columnwidth]{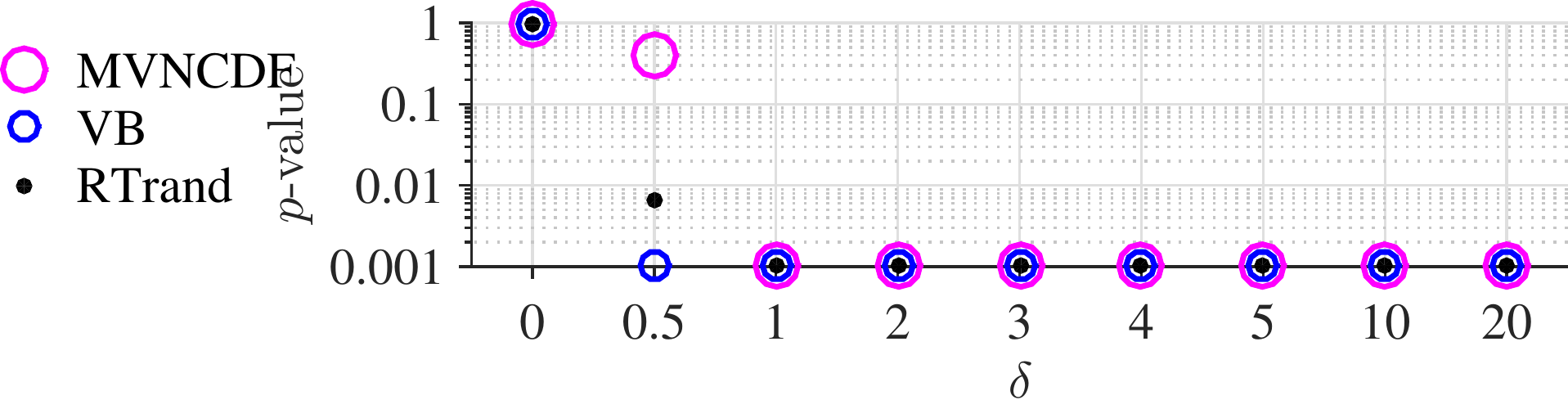}
\hspace{-0.5ex}
\includegraphics[width=0.437 \columnwidth]{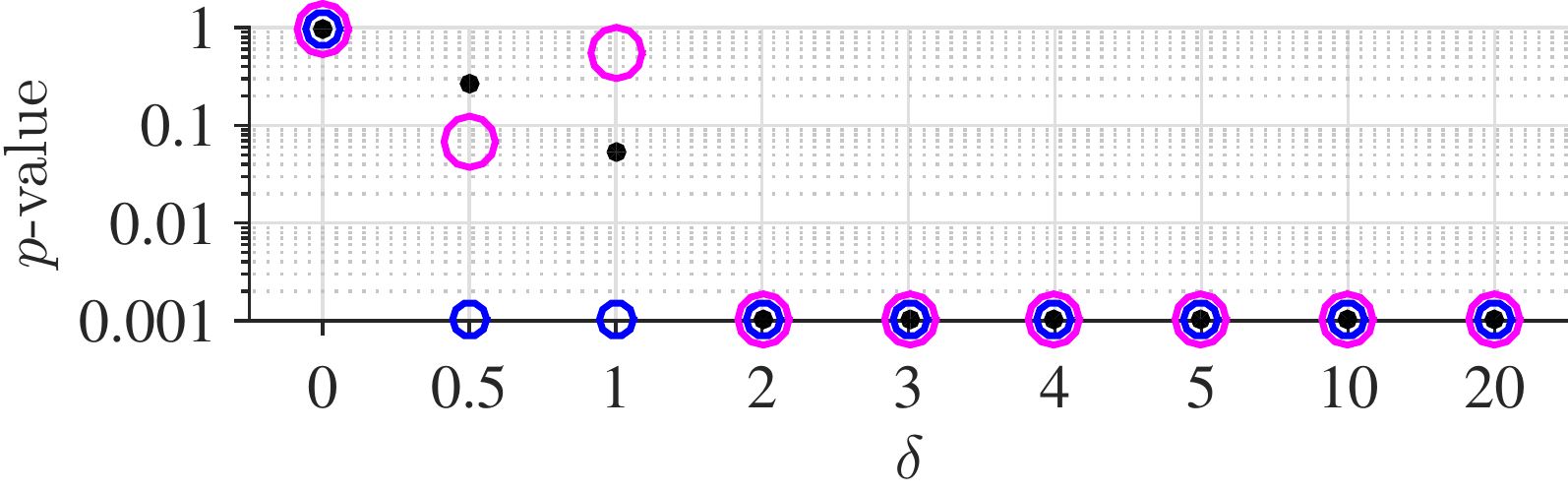}
\vspace{-3mm}
\caption{With large $\delta$ values RTopt is closer to PF than RTrand but less accurate than computationally heavy MVNCDF (upper row). $p$-values of two-sided Wilcoxon signed rank test (bottom row) show that the differences from RTopt are significant with large $\delta$. (left) $\rho\!=\!1^2$, (right) $\rho\!=\!20^2$.} \label{fig:mvncdftest}
\vspace{3ex}
\includegraphics[width=0.54 \columnwidth]{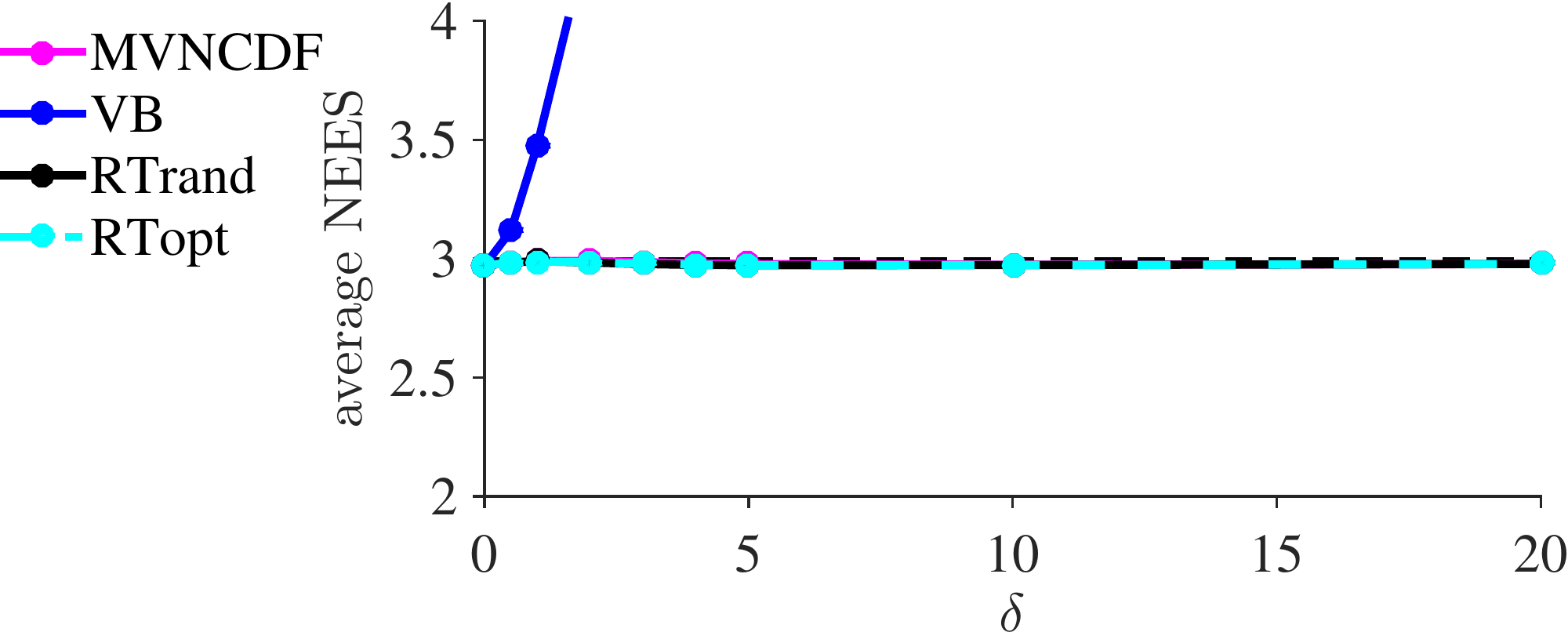}
\hfil
\includegraphics[width=0.43 \columnwidth]{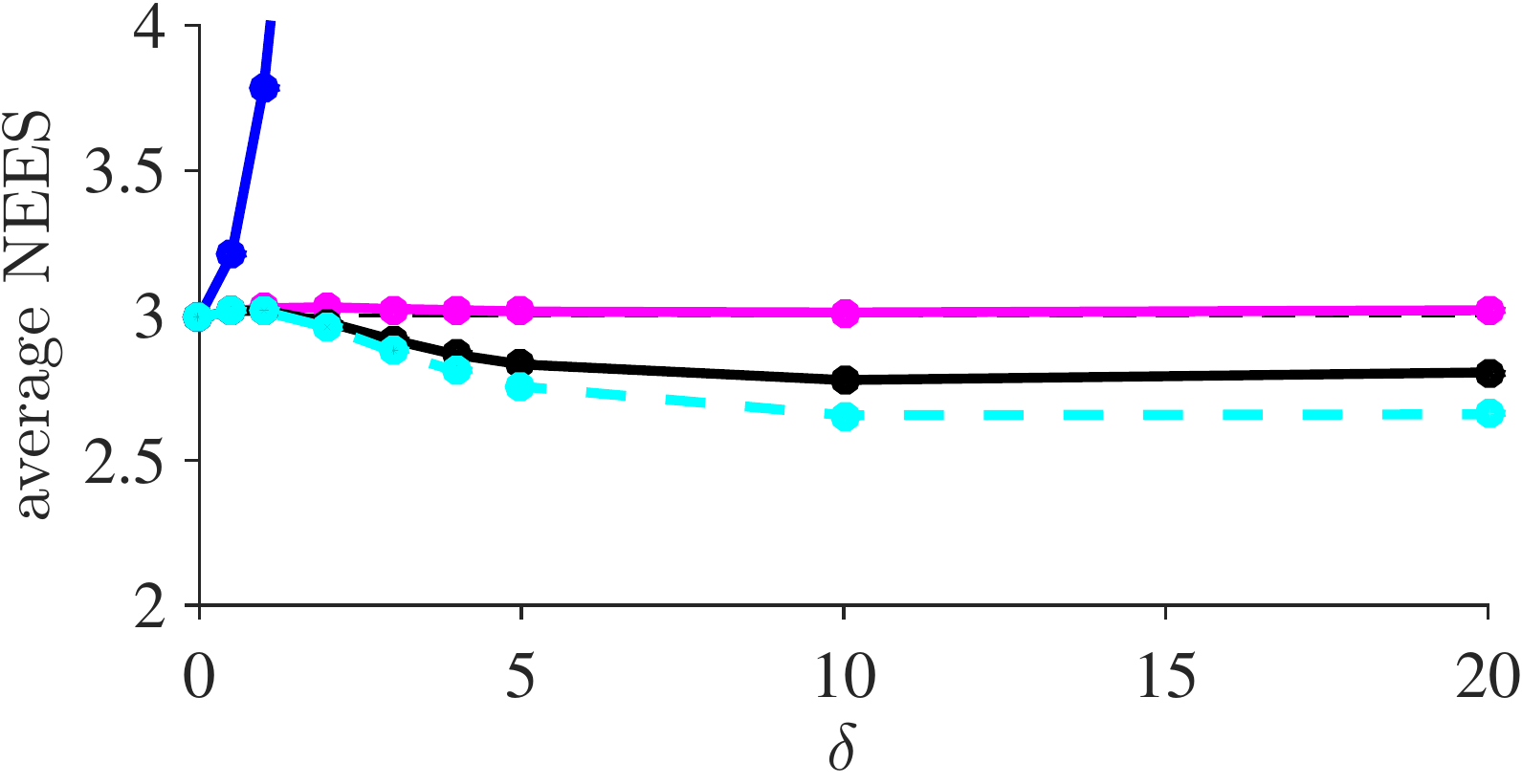}
\vspace{-2mm}
\caption{RTopt's NEES is closest to the optimal value 3, so recursive truncation gives the most realistic covariance matrix. (left) $\rho\!=\!1^2$, (right) $\rho\!=\!20^2$.} \label{fig:neestest}
\end{figure}

\subsection{Skew-$t$ inference}

In this section, the proposed skew $t$ filter (\STRTVBF) is compared with state-of-the-art filters using numerical simulations of a 100-step trajectory. The compared methods are a bootstrap-type PF, STVBF \cite{nurminen2015a}, $t$ variational Bayes filter (TVBF) \cite{piche2012}, and Kalman filter (KF) with measurement validation gating \cite[Ch.\ 5.7.2]{bar-shalom} that discards the measurement components whose normalized innovation squared is larger than the $\chi_1^2$-distribution's 99\,\% quantile. TVBF and KF's parameters are numerically optimized maximum expected likelihood parameters. The results are based on 1000 Monte Carlo replications.

Fig.\ \ref{fig:time_vs_acc} illustrates the filter iterations' convergence. The figure shows that the proposed \STRTVBF converges within 5 VB iterations and outperforms the other filters except for PF already with 2 VB iterations. Furthermore, Fig.\ \ref{fig:time_vs_acc} shows that \STRTVBF's converged state is close to the PF's converged state in RMSE, and PF can require as many as 10\,000 particles to outperform \STRTVBF. \STRTVBF also converges faster than STVBF when the process variance parameter $q$ is large. With a small $q$, STVBF with a small number of VB iterations can give a lower RMSE than the converged STVBF. The reason for this is probably that in the first iterations STVBF accommodates outliers by decreasing the $\Lambda_k$ estimates, which also affects the covariance, while in the later iterations $u_k$ estimates are increased, which makes the mean more accurate but underestimates the covariance.
\begin{figure}[t]
\centering
\includegraphics[width=0.48\columnwidth]{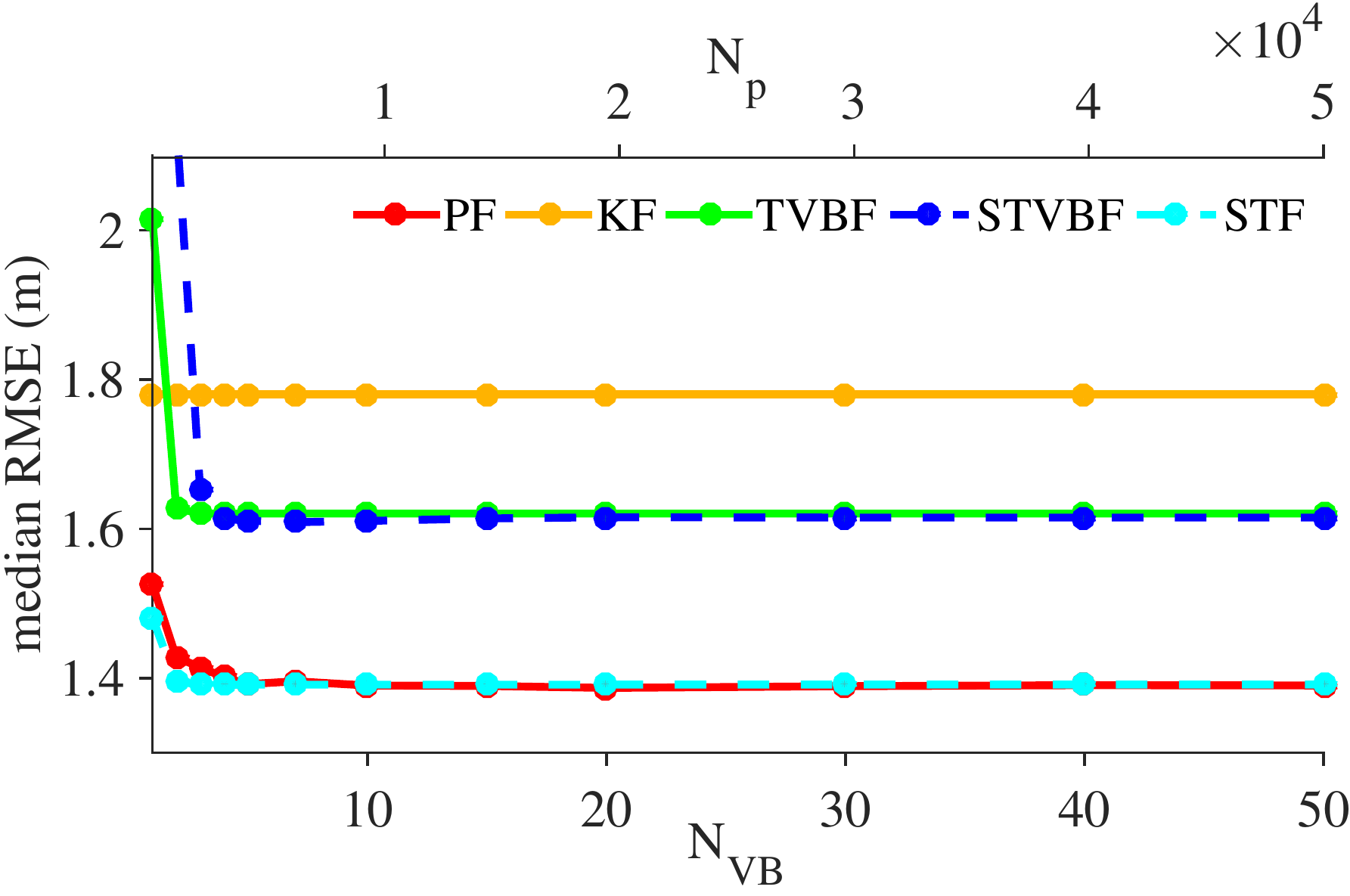}
\hfil
\includegraphics[width=0.48\columnwidth]{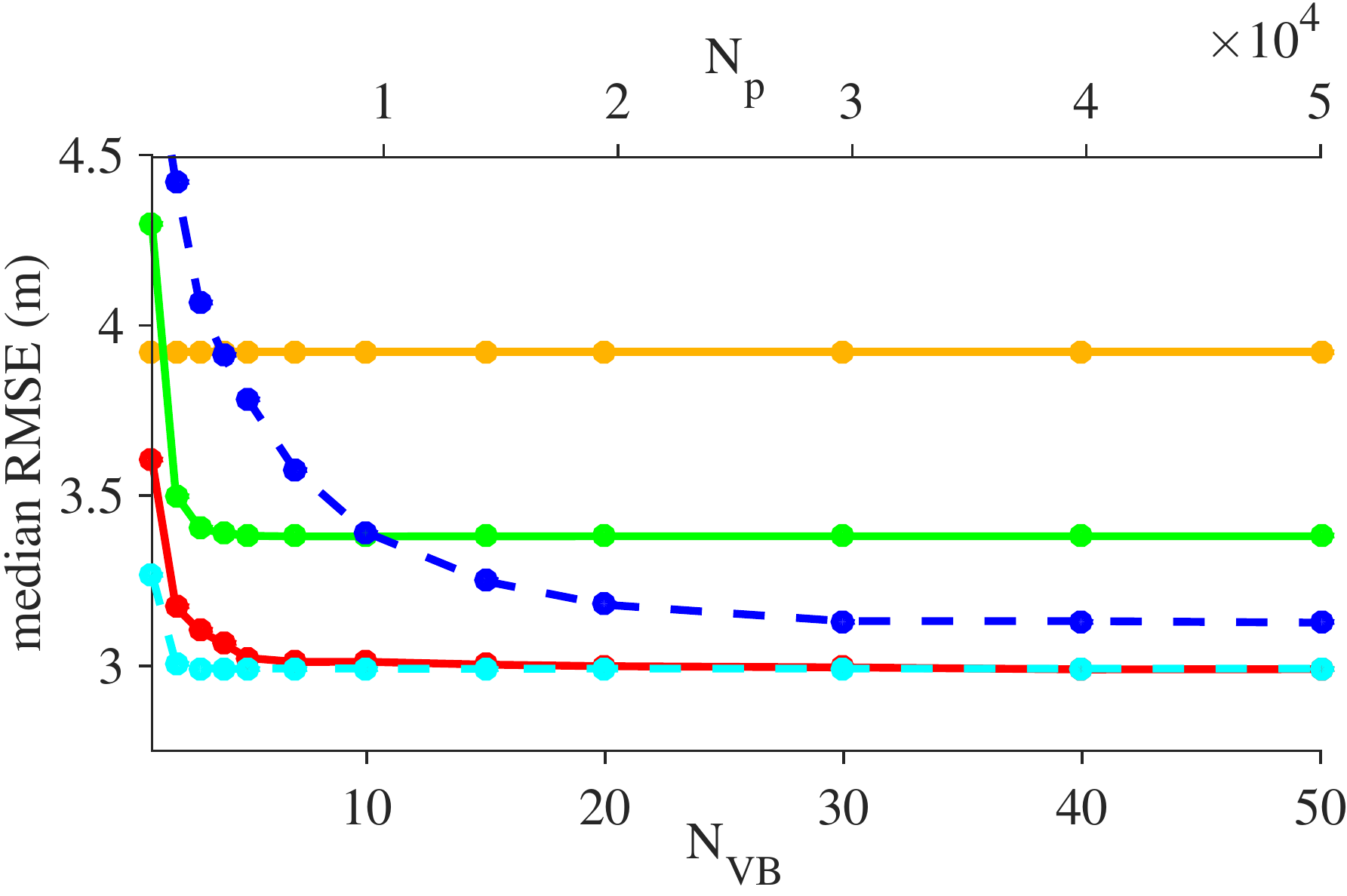}
\vspace{-2mm}
\caption{\STRTVBF converges in five iterations. The required number of PF particles can be 10.000. (left) $q\!=\!0.5,\ \delta\!=\!5$, (right) $q\!=\!5,\ \delta\!=\!5$.}
\label{fig:time_vs_acc}
\vspace{2ex}
%\end{figure}
%\begin{figure}[t]
\centering
\includegraphics[clip,width=0.48\columnwidth]{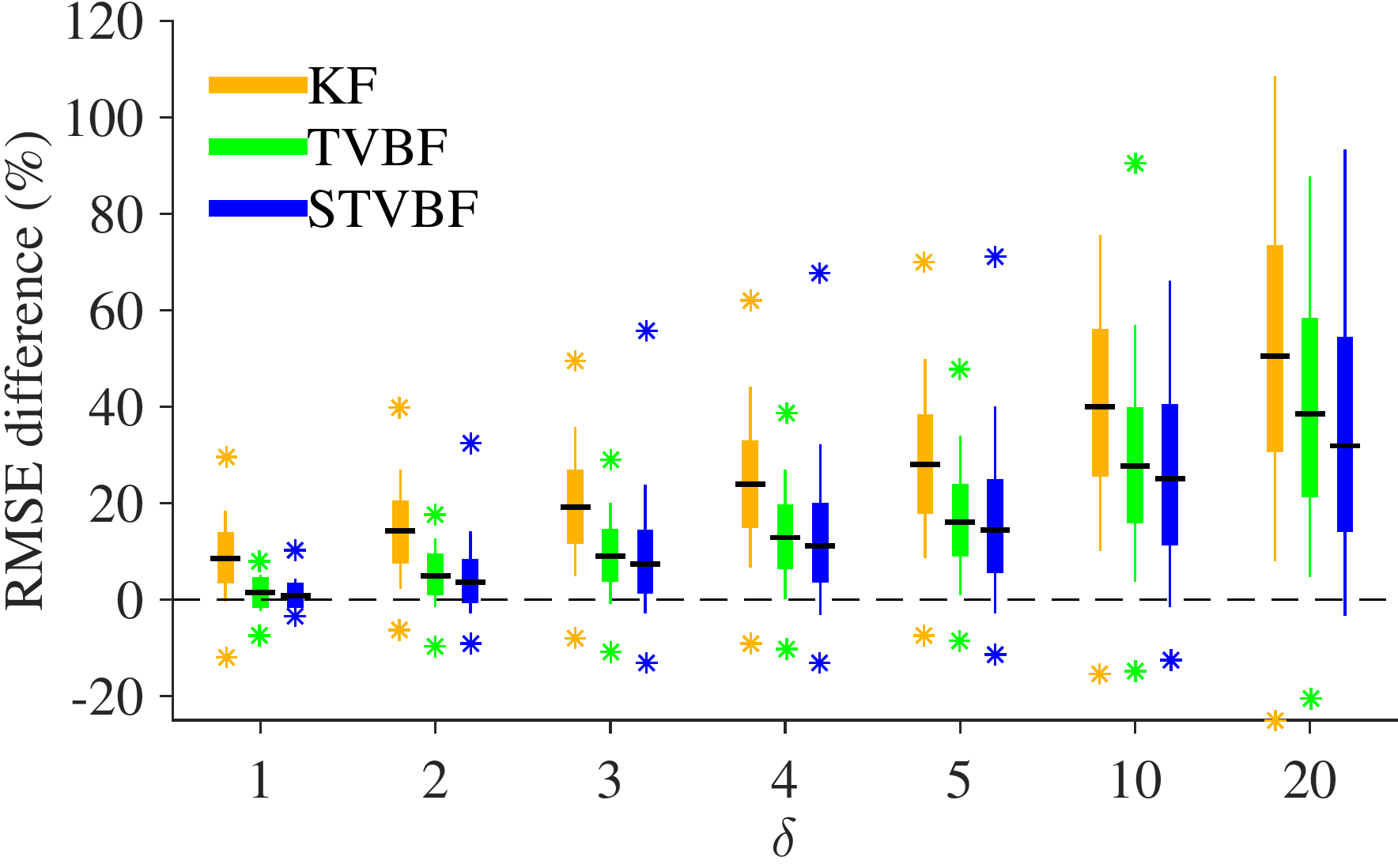}
\hfil
\includegraphics[width=0.48\columnwidth]{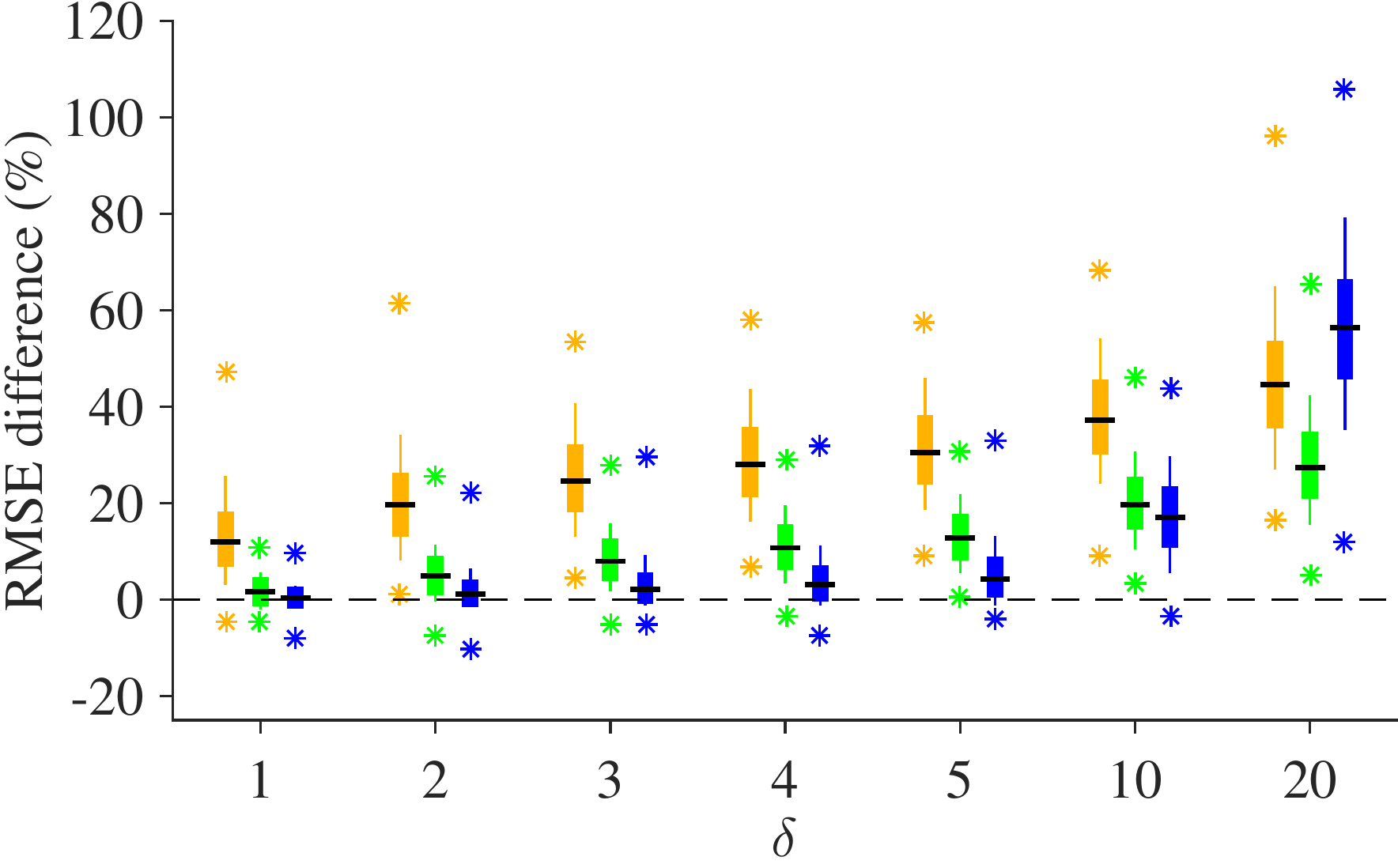}
\vspace{-2mm}
\caption{\STRTVBF outperforms the comparison methods with skew-$t$-distributed noise. RMSE differences per cent of the \STRTVBF's RMSE. The relative differences increase as $\delta$ is increased. (left) $q=0.5$, (right) $q=5$.} \label{fig:rmse_skewt}
\vspace{2ex}
%\end{figure}
%\begin{figure}[t]
\centering
\includegraphics[clip,width=0.75\columnwidth]{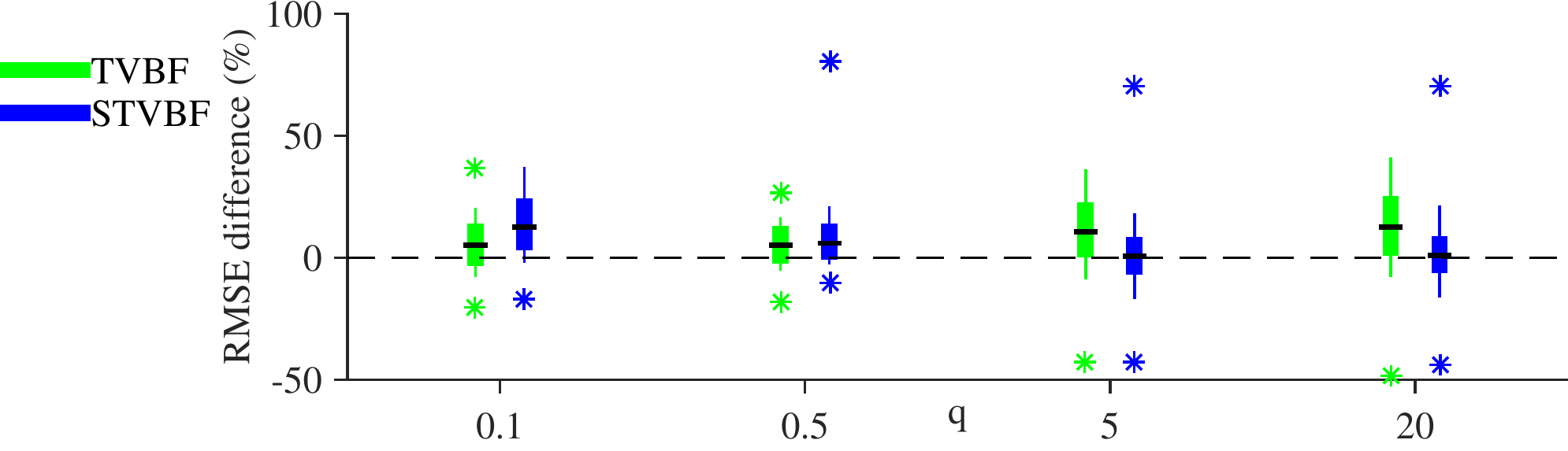}
\vspace{-2mm}
\caption{\STRTVBF outperforms TVBF and STVBF with UWB noise. RMSE differences per cent of the \STRTVBF's RMSE.} \label{fig:uwbtest}
\vspace{2ex}
%\end{figure}
%\begin{figure}[b]
\centering
\includegraphics[width=0.48\columnwidth]{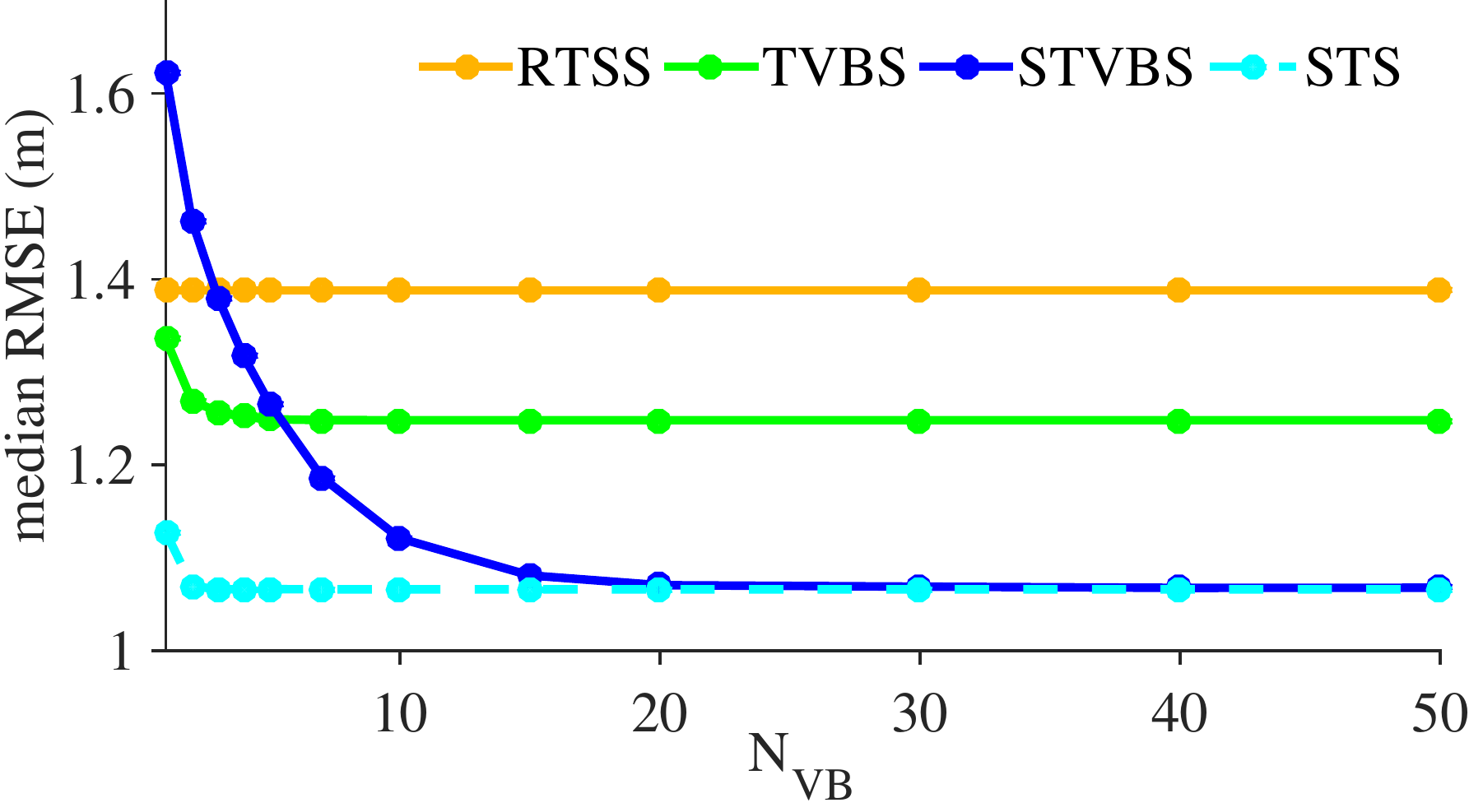}
\hfil
\includegraphics[width=0.48\columnwidth]{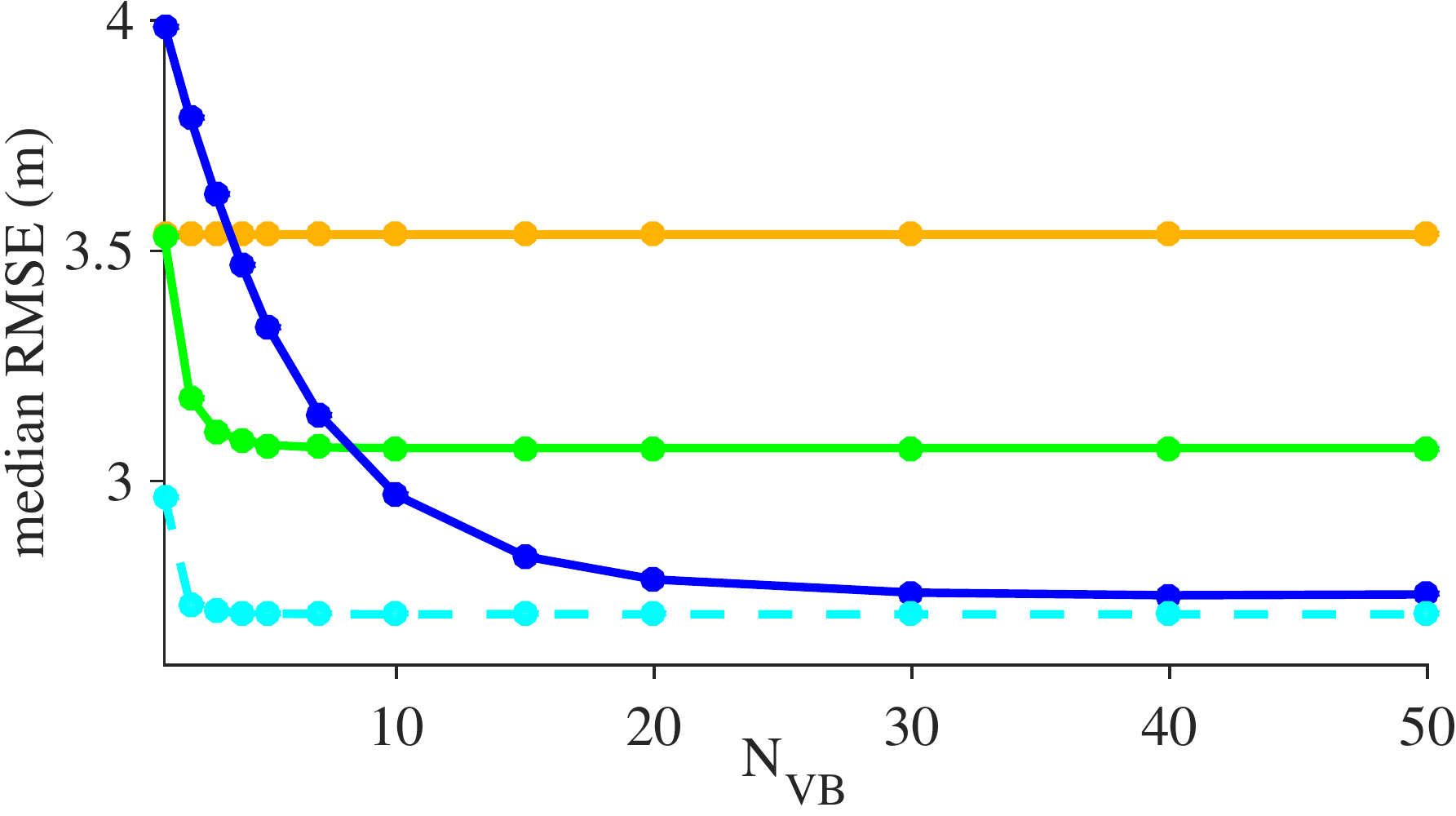}
\vspace{-2mm}
\caption{Five \STRTVBS  iterations give the converged state's RMSE. (left) $q\!=\!0.5,\ \delta\!=\!5$, (right) $q\!=\!5,\ \delta\!=\!5$.}
\label{fig:time_vs_acc_smoother}
\end{figure}

Fig.\ \ref{fig:rmse_skewt} shows the distributions of the RMSE differences from the \STRTVBF's RMSE as percentages of the \STRTVBF's RMSE. STF clearly has the smallest RMSE when $\delta\!\geq\!3$. Unlike STVBF, the new \STRTVBF improves accuracy even with small $q$, which can be explained by the improved covariance approximation.

Fig.\ \ref{fig:uwbtest} shows the results of a test where the measurement noise in \eqref{eq:measmodel} is generated from the histogram distribution of the UWB time-of-flight data set used in \cite{nurminen2015b}. The filters use the maximum likelihood parameters fitted to the data set numerically with the degrees-of-freedom parameters fixed to 4. The proposed method \STRTVBF has the lowest RMSE also in this test, which shows that the method is robust to deviations from the assumed distribution and thus usable with real data.

The proposed smoother is also tested with measurements generated from \eqref{eq:measmodel}. The compared smoothers are the proposed skew $t$ smoother (\STRTVBS), skew $t$ variational Bayes Smoother (STVBS) \cite{nurminen2015a}, $t$ variational Bayes smoother (TVBS) \cite{piche2012}, and the RTSS with 99\,\% measurement validation gating \cite{RTS-1965}. Fig.\ \ref{fig:time_vs_acc_smoother} shows that \STRTVBS has lower RMSE than the smoothers based on symmetric distributions. Furthermore, \STRTVBF's VB iteration converges in five iterations, so it is faster than STVBF.

%\begin{figure}[tb]
%\centering
%\includegraphics[clip,width=0.48\columnwidth]{}
%\hfil
%\includegraphics[width=0.48\columnwidth]{}
%\vspace{-0.5mm}
%\caption{Measurement noise from skew $t$-distribution. RMSE differences per cent of the \STRTVBS's RMSE. The proposed \STRTVBS outperforms the comparison methods. The relative differences increase as $\delta$ is increased. (left) $q=0.1$, (right) $q=5$.} \label{fig:rmse_skewt_smoother}
%\end{figure}
%!TEX root = VB-Skewness-rectrunc.tex
\section{Conclusions} \label{sec:conclusions}
We have proposed a novel approximate filter and smoother for linear state-space models with heavy-tailed and skewed measurement noise distribution. The algorithms are based on the variational Bayes approximation, where some posterior independence approximations are removed from the earlier versions of the algorithms to avoid significant underestimation of the posterior covariance matrix. Removal of independence approximations is enabled by the recursive truncation algorithm for approximating the mean and covariance matrix of truncated multivariate normal distribution. An optimal processing sequence is given for the recursive truncation.
%=============================================================================
%===============================  New SECTION  ===============================
%=============================================================================
\vfill\clearpage\onecolumn
\appendices
\section{Derivations for the smoother} \label{sec:smoother}
We derive the expectations for the iterations of the variational Bayes smoother approximating the joint smoothing density
\begin{align}
p&(x_{1:K},u_{1:K},\Lambda_{1:K}|y_{1:K}) \propto p(x_{1:K},u_{1:K},\Lambda_{1:K},y_{1:K})\\
&= p(x_1)\prod_{l=1}^{K-1} p(x_{l+1}|x_l)\prod_{k=1}^K p(y_k|x_k,u_k,\Lambda_k)\,p(u_k|\Lambda_k)\,p(\Lambda_k) \\
&=\N(x_1;x_{1|0},P_{1|0})\prod_{l=1}^{K-1}\N(x_{l+1};Ax_l,Q) \cdot \prod_{k=1}^K\left\{ \N(y_k;Cx_k+\Delta u_k,\Lambda_k^{-1}R)\,\N_+(u_k; 0,\Lambda_k^{-1}) \prod_{i=1}^{n_y}\G\left([\Lambda_k]_{ii};\frac{\nu_i}{2},\frac{\nu_i}{2}\right) \right\}
\end{align}
which is approximated by a factorized probability density function (PDF) in the form
\begin{align}
\label{eq:factors}
p(x_{1:K},&u_{1:K},\Lambda_{1:K}|y_{1:K})\approx q_{xu}(x_{1:K},u_{1:K})\,q_\Lambda(\Lambda_{1:K}).
\end{align}
The VB solutions $\hat{q}_{xu}$ and $\hat{q}_\Lambda$ can be obtained by cyclic iteration of
\begin{subequations}
\label{eq:IterativeOptimization}
\begin{align}
\log {q}_{xu}(x_{1:K},u_{1:K}) \leftarrow& \E_{{q}_{\Lambda}}[\log p(y_{1:K},x_{1:K},u_{1:K},\Lambda_{1:K})]+c_{xu}\label{eq:IterativeOptimizationxu}\\
\log {q}_{\Lambda}(\Lambda_{1:K}) \leftarrow& \E_{{q}_{xu}}[\log p(y_{1:K},x_{1:K},u_{1:K},\Lambda_{1:K})]+c_{\Lambda}\label{eq:IterativeOptimizationL}
\end{align}
\end{subequations}
where the expected values are taken with respect to the current $q_{xu}$ and $q_\Lambda$, and $c_{xu}$ and $c_\Lambda$  are constants with respect to the variables $\left[\begin{smallmatrix}x_k\\u_k\end{smallmatrix}\right]$ and $\Lambda_k$,  respectively ~\cite[Chapter 10]{Bishop2007}\cite{TzikasLG2008}. This appendix gives the derivations for one iteration of \eqref{eq:IterativeOptimization}. For brevity all constant values are denoted by $c$. The logarithm of the joint smoothing distribution is
\begin{equation}
\begin{split}
\log p(&x_{1:K}, u_{1:K}, \Lambda_{1:K}, y_{1:K}) = \log\N(x_1; x_{1|0},P_{1|0}) + \sum_{l=1}^K \log\N(x_{l+1}; A x_l, Q) \\
&+ \sum_{k=1}^{K-1} \left\{ \log \N(y_k; C x_k+\Delta u_k, \Lambda_k^{-1}R) + \log\N_+(u_k; 0,\Lambda_k^{-1}) \right\} + \sum_{k=1}^K \sum_{i=1}^{n_y} \log\G([\Lambda_k]_{ii}; \tfrac{\nu_i}{2}, \tfrac{\nu_i}{2}),
\end{split}
\end{equation}

%==
\subsection{Derivations for $q_{xu}$} \label{sec:Smootherq_xu}
Eq.\ \eqref{eq:IterativeOptimizationxu} gives
\begin{align}
\log& q_{xu}(x_{1:K},u_{1:K}) = \log\N(x_1; x_{1|0}, P_{1|0}) + \sum_{l=1}^{K-1} \log\N(x_{l+1}; Ax_l, Q) \nonumber\\
&+ \sum_{k=1}^K \E_{q_\Lambda} [\log\N(y_k; Cx_k + \Delta u_k, \Lambda_k^{-1} R)+\log\N_+(u_k;0,\Lambda_k^{-1})] + c \\
=& \log\N(x_1; x_{1|0}, P_{1|0}) + \sum_{l=1}^{K-1} \log\N(x_{l+1}; Ax_l, Q) \nonumber\\
&- \frac{1}{2} \sum_{k=1}^K \E_{q_\Lambda} [(y_k-Cx_k-\Delta u_k)^\t R^{-1} \Lambda_k (y_k-Cx_k-\Delta u_k) + u_k^\t \Lambda_k u_k] + c \\
=& \log\N(x_1; x_{1|0}, P_{1|0}) + \sum_{l=1}^{K-1} \log\N(x_{l+1}; Ax_l, Q) \nonumber\\
&- \frac{1}{2} \sum_{k=1}^K \left\{ (y_k-Cx_k-\Delta u_k)^\t R^{-1} \Lambda_{k|K} (y_k-Cx_k-\Delta u_k) + u_k^\t \Lambda_{k|K} u_k \right\} + c \\
=&  \log\N(x_1; x_{1|0}, P_{1|0}) + \sum_{l=1}^{K-1} \log\N(x_{l+1}; Ax_l, Q) \nonumber\\
&+ \sum_{k=1}^K \left\{ \log\N(y_k; Ax_k + \Delta u_k, \Lambda_{k|K}^{-1} R) + \log\N(u_k;0,\Lambda_{k|K}^{-1}) \right\} + c \\
=&  \log\N\left( \begin{bmatrix} x_1 \\ u_1 \end{bmatrix} ; \begin{bmatrix} x_{1|0}\\0 \end{bmatrix} , \begin{bmatrix} P_{1|0} & \zeros \\\zeros & \Lambda_{1|K}^{-1} \end{bmatrix} \right) + \sum_{l=1}^{K-1} \log \N\left( \begin{bmatrix} x_{l+1}\\u_{l+1} \end{bmatrix}; \begin{bmatrix} A & \zeros \\\zeros & \zeros \end{bmatrix} \begin{bmatrix} x_l\\u_l \end{bmatrix}, \begin{bmatrix} Q & \zeros \\\zeros & \Lambda_{l+1|K}^{-1} \end{bmatrix} \right) \nonumber\\
&+ \log\N\left( y_k; \begin{bmatrix} C & \Delta \end{bmatrix} \begin{bmatrix} x_k\\u_k \end{bmatrix}, \Lambda_{k|K}^{-1} R \right) + c,\ u_{1:K}\geq0 ,
\end{align}
where $\Lambda_{k|K} \triangleq \E_{q_\Lambda}[\Lambda_k]$ is derived in Section \ref{sec:Smootherq_L}, and $u_{1:K}\geq0$ means that all the components of all $u_k$ are required to be nonnegative for each $k=1\cdots K$. Up to the truncation of the $u$ components, $q_{xu}(x_{1:K}, u_{1:K})$ has thus the same form as the joint smoothing posterior of a linear state-space model with the state transition matrix $\widetilde{A} \triangleq \left[ \begin{smallmatrix} A&\zeros\\\zeros&\zeros \end{smallmatrix} \right]$, process noise covariance matrix $\widetilde{Q_k} \triangleq \left[ \begin{smallmatrix} Q&\zeros\\\zeros&\Lambda_{k+1|K}^{-1} \end{smallmatrix} \right]$, measurement model matrix $\widetilde{C} \triangleq \left[ \begin{smallmatrix} C & \Delta  \end{smallmatrix} \right]$, and measurement noise covariance matrix $\widetilde{R} \triangleq \Lambda_{k|K}^{-1} R$. Let us denote the PDFs related to this state-space model with $\widetilde{p}$.

It would be possible to compute the truncated multivariate normal posterior of the joint smoothing distribution $\widetilde{p}\left(\left[\begin{smallmatrix}x_{1:K}\\u_{1:K}\end{smallmatrix}\right]|y_{1:K}\right)$, %using the Rauch--Tung--Striebel smoother \cite{RTS-1965}, 
and account for the truncation of $u_{1:K}$ to the positive orthant using the recursive truncation. However, this would be impractical with large $K$ due to the large dimensionality $K\times (n_x+n_y)$. A feasible solution is to approximate each filtering distribution in the Rauch--Tung--Striebel smoother's (RTSS \cite{RTS-1965}) forward filtering step with a multivariate normal distribution by
\begin{align}
\widetilde{p}( x_k,u_k|y_{1:k}) &= \frac{1}{C}\,\N\left(\begin{bmatrix} x_k\\u_k \end{bmatrix}; z_{k|k}', Z_{k|k}' \right) \cdot [u_k\geq0]\\
&\approx \N\left(\begin{bmatrix} x_k\\u_k \end{bmatrix}; z_{k|k}, Z_{k|k} \right) 
\end{align}
for each $k=1\cdots K$, where $[u_k\geq0]$ is the Iverson bracket notation
\[
[u_k\geq0] = \left\{ \begin{array}{ll} 1, &\text{if all components of } u_k \text{ are non-negative}\\ 0, &\text{otherwise} \end{array} \right. ,
\]
$C$ is the normalization factor, and $z_{k|k} \triangleq \E_{\widetilde{p}}\left[\left[\begin{smallmatrix}x_k\\u_k\end{smallmatrix}\right]|y_{1:k}\right]$ and $Z_{k|k} \triangleq \var_{\widetilde{p}}\left[\left[\begin{smallmatrix}x_k\\u_k\end{smallmatrix}\right]|y_{1:k}\right]$ are approximated using the recursive truncation. Given the multivariate normal approximations of the filtering posteriors $\widetilde{p}(x_k,u_k|y_{1:k})$, by Lemma \ref{lem:gaussian_smoother} the backward recursion of the RTSS gives multivariate normal approximations of the smoothing posteriors $\widetilde{p}(x_k,u_k|y_{1:K})$. The quantities required in the derivations of Section \ref{sec:Smootherq_L} are the expectations of the smoother posteriors $x_{k|K} \triangleq \E_{q_{xu}}[x_k]$, $u_{k|K} \triangleq \E_{q_{xu}}[u_k]$, and the covariance matrices $Z_{k|K} \triangleq \var_{q_{xu}}\left[\begin{smallmatrix} x_k\\u_k \end{smallmatrix}\right]$ and $U_{k|K} \triangleq \var_{q_{xu}}[u_k]$.

\begin{lemma} \label{lem:gaussian_smoother}
\newcommand{\xu}{z}
\newcommand{\cxu}{Z}
Let $\{\xu_k\}_{k=1}^K$ be a linear--Gaussian process, and $\{y_k\}_{k=1}^K$ a measurement process such that
\begin{align}
\xu_1 &\sim \N(\xu_{1|0},\cxu_{1|0}) \\
\xu_k | \xu_{k-1} &\sim \N(A \xu_{k-1}, Q) \\
y_k | \xu_k &\sim \text{(a known distribution)} .
\end{align}
with the standard Markovianity assumptions. Then, if the filtering posterior $p(\xu_k|y_{1:k})$ is a multivariate normal distribution for each $k$, then for each $k < K$
\begin{equation} \label{eq:smootherlemma_assume}
\xu_k | y_{1:K} \sim \N(\xu_{k|K}, \cxu_{k|K}), 
\end{equation}
where
\begin{align}
\xu_{k|K} &= \xu_{k|k} + G_k(\xu_{k+1|K}-A \xu_{k|k}) ,\\
\cxu_{k|K} &= \cxu_{k|k} + G_k(\cxu_{k+1|K}-A\cxu_{k|k}A^\t-Q)G_k^\t ,\\
G_k &= \cxu_{k|k} A^\t (A\cxu_{k|k}A^\t + Q)^{-1} ,
\end{align}
and $\xu_{k|k}$ and $\cxu_{k|k}$ are the mean and covariance matrix of the filtering posterior $p(\xu_k|y_{1:k})$.
\end{lemma}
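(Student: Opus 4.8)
The plan is to carry out the classical Rauch--Tung--Striebel backward recursion, exploiting the fact that this recursion never refers to the form of the measurement likelihood $p(y_k|z_k)$: it uses only the assumed Gaussianity of the filtering marginals $p(z_k|y_{1:k})=\N(z_k;z_{k|k},Z_{k|k})$, the linear--Gaussian dynamics, and the Markov structure of the model. I would argue by downward induction on $k$. The base case $k=K$ is immediate, because $p(z_K|y_{1:K})$ \emph{is} the filtering posterior at the final time and hence Gaussian by hypothesis, with $z_{K|K}$ and $Z_{K|K}$ its mean and covariance.

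For the inductive step, suppose $p(z_{k+1}|y_{1:K})=\N(z_{k+1};z_{k+1|K},Z_{k+1|K})$. First I would write the marginalisation identity
\begin{equation*}
p(z_k|y_{1:K})=\int p(z_k|z_{k+1},y_{1:K})\,p(z_{k+1}|y_{1:K})\d{z_{k+1}}
\end{equation*}
and then invoke the Markov assumptions to discard the future measurements from the backward kernel, $p(z_k|z_{k+1},y_{1:K})=p(z_k|z_{k+1},y_{1:k})$ (given $z_{k+1}$, the block $y_{k+1:K}$ is independent of $z_k$). This kernel is Gaussian and affine in $z_{k+1}$: by Bayes' rule it is proportional, viewed as a function of $(z_k,z_{k+1})$, to $\N(z_{k+1};Az_k,Q)\,\N(z_k;z_{k|k},Z_{k|k})$, which is the density of a joint normal vector with mean $\left[\begin{smallmatrix}z_{k|k}\\Az_{k|k}\end{smallmatrix}\right]$ and covariance $\left[\begin{smallmatrix}Z_{k|k}&Z_{k|k}A^{\mathrm{T}}\\AZ_{k|k}&AZ_{k|k}A^{\mathrm{T}}+Q\end{smallmatrix}\right]$; the standard Gaussian conditioning formula then yields
\begin{equation*}
z_k|z_{k+1},y_{1:k}\sim\N\!\big(z_{k|k}+G_k(z_{k+1}-Az_{k|k}),\;Z_{k|k}-G_kAZ_{k|k}\big),
\end{equation*}
with $G_k=Z_{k|k}A^{\mathrm{T}}(AZ_{k|k}A^{\mathrm{T}}+Q)^{-1}$ and, crucially, a conditional covariance that does not depend on $z_{k+1}$.

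Substituting this kernel into the marginalisation integral expresses $z_k|y_{1:K}$ as an affine image of the (by inductive hypothesis) Gaussian $z_{k+1}|y_{1:K}$ plus an independent Gaussian, hence Gaussian; propagating the mean and covariance through the affine map gives $z_{k|K}=z_{k|k}+G_k(z_{k+1|K}-Az_{k|k})$ and $Z_{k|K}=Z_{k|k}-G_kAZ_{k|k}+G_kZ_{k+1|K}G_k^{\mathrm{T}}$. It then remains only to match the covariance to the stated symmetric form: from the defining identity $G_k(AZ_{k|k}A^{\mathrm{T}}+Q)=Z_{k|k}A^{\mathrm{T}}$ one obtains $G_k(AZ_{k|k}A^{\mathrm{T}}+Q)G_k^{\mathrm{T}}=Z_{k|k}A^{\mathrm{T}}G_k^{\mathrm{T}}=(G_kAZ_{k|k})^{\mathrm{T}}=G_kAZ_{k|k}$, the last equality because $G_kAZ_{k|k}=Z_{k|k}A^{\mathrm{T}}(AZ_{k|k}A^{\mathrm{T}}+Q)^{-1}AZ_{k|k}$ is symmetric; hence $-G_kAZ_{k|k}=-G_k(AZ_{k|k}A^{\mathrm{T}}+Q)G_k^{\mathrm{T}}$ and therefore $Z_{k|K}=Z_{k|k}+G_k(Z_{k+1|K}-AZ_{k|k}A^{\mathrm{T}}-Q)G_k^{\mathrm{T}}$, as claimed. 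I expect the only mildly delicate points to be the conditional-independence bookkeeping that licenses replacing $y_{1:K}$ by $y_{1:k}$ in the backward kernel (spelling out which d-separation the ``standard Markovianity assumptions'' supply) and this final algebraic rearrangement; the rest is routine manipulation of Gaussian densities.
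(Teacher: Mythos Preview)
Your proof is correct and follows essentially the same route as the paper's: backward induction, the Markov reduction $p(z_k\mid z_{k+1},y_{1:K})=p(z_k\mid z_{k+1},y_{1:k})$, building the joint $p(z_k,z_{k+1}\mid y_{1:k})$ from the filtering posterior and the dynamics, Gaussian conditioning, and then combining with the inductive hypothesis on $p(z_{k+1}\mid y_{1:K})$. The only cosmetic difference is that the paper writes the conditional covariance directly in the symmetric form $Z_{k|k}-G_k(AZ_{k|k}A^{\mathrm{T}}+Q)G_k^{\mathrm{T}}$, so it obtains the stated $Z_{k|K}$ without your final algebraic step.
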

\begin{proof}
\newcommand{\xu}{z}
\newcommand{\cxu}{Z}
The proof is mostly similar to the proof of \cite[Theorem 8.2]{sarkka2013}. First, assume that
\begin{equation} \label{eq:induction_assumption}
\xu_{k+1}|y_{1:K} \sim \N(\xu_{k+1|K}, \cxu_{k+1|K}) .
\end{equation}
for some $k<K$. The joint conditional distribution of $\xu_k$ and $\xu_{k+1}$ is then
\begin{align}
p(\xu_k,\xu_{k+1} |y_{1:k})
&= p(\xu_{k+1} | \xu_k)\,p(\xu_k | y_{1:k})\ \ ||\,\text{Markovianity assumption} \\
&= \N(\xu_{k+1}; A\xu_k, Q)\,\N(\xu_k; \xu_{k|k}, \cxu_{k|k}) \\
&= \N\left( \begin{bmatrix} \xu_k\\\xu_{k+1} \end{bmatrix}; \begin{bmatrix} \xu_{k|k}\\A\xu_{k|k} \end{bmatrix}, \begin{bmatrix} \cxu_{k|k} & \cxu_{k|k}A^\t\\A\cxu_{k|k} & A\cxu_{k|k}A^\t+Q \end{bmatrix} \right) ,
\end{align}
so by the conditioning rule of the multivariate normal distribution
\begin{align}
p(\xu_k|\xu_{k+1},y_{1:k})
&= \N(\xu_k; \xu_{k|k}+G_k(\xu_{k+1}-A\xu_{k|k}), \cxu_{k|k}-Z_{k|k}A^\t (AZ_{k|k}A^\t+Q)^{-1} A\cxu_{k|k}) \\
&= \N(\xu_k; \xu_{k|k}+G_k(\xu_{k+1}-A\xu_{k|k}), \cxu_{k|k}-G_k (A\cxu_{k|k}A^\t + Q) G_k^\t) .
\end{align}
We use this formula in
\begin{align}
p(\xu_k,\xu_{k+1} | y_{1:K})
&= p(\xu_k | \xu_{k+1}, y_{1:K})\,p(\xu_{k+1}|y_{1:K}) \\
&= p(\xu_k | \xu_{k+1}, y_{1:k})\,p(\xu_{k+1}|y_{1:K})\ \ ||\,\text{Markovianity assumption} \\
&= \N(\xu_k; \xu_{k|k}+G_k(\xu_{k+1}-A\xu_{k|k}), \cxu_{k|k}-G_k (A\cxu_{k|k}A^\t + Q) G_k^\t)\,\N(\xu_{k+1}|\xu_{k+1|K}, \cxu_{k+1|K}) \\
&= \N\left( \begin{bmatrix} \xu_k\\\xu_{k+1} \end{bmatrix}; \begin{bmatrix} \xu_{k|k}+G_k(\xu_{k+1|K}-A\xu_{k|k})\\\bullet \end{bmatrix}, \begin{bmatrix} G_k\cxu_{k+1|K}G_k^\t + \cxu_{k|k}-G_k (A\cxu_{k|k}A^\t + Q) G_k^\t & \bullet\\\bullet&\bullet \end{bmatrix} \right) \\
&= \N\left( \begin{bmatrix} \xu_k\\\xu_{k+1} \end{bmatrix}; \begin{bmatrix} \xu_{k|k}+G_k(\xu_{k+1|K}-A\xu_{k|k})\\\bullet \end{bmatrix}, \begin{bmatrix} \cxu_{k|k}+G_k (\cxu_{k+1|K}-A\cxu_{k|k}A^\t-Q) G_k^\t & \bullet\\\bullet&\bullet \end{bmatrix} \right),
\end{align}
so
\begin{align} 
p(\xu_k|y_{1:K}) &= \N(\xu_k;  \xu_{k|k}+G_k(\xu_{k+1|K}-A\xu_{k|k}), \cxu_{k|k}+G_k (\cxu_{k+1|K}-A\cxu_{k|k}A^\t-Q) G_k^\t) \\
&= \N(\xu_k; \xu_{k|K}, \cxu_{k|K}) . \label{eq:induction_proof}
\end{align}
Because $\xu_K|y_{1:K} \sim \N(\xu_{K|K}, Z_{K|K})$, and because \eqref{eq:induction_assumption} implies \eqref{eq:induction_proof}, the statement holds by the induction argument.
\end{proof}

%==
\subsection{Derivations for $q_\Lambda$} \label{sec:Smootherq_L}

Eq.\ \eqref{eq:IterativeOptimizationL} gives
\begin{align}
\log q_\Lambda(\Lambda_{1:K})=&\sum_{k=1}^K \left\{\E_{q_{xu}}\left[\log \N(y_k;Cx_k+\Delta u_k,\Lambda_k^{-1}R) +\log \N_+(u_k; 0,\Lambda_k^{-1})\right]\right\} +\sum_{k=1}^K \sum_{i=1}^{n_y}\log\G\left([\Lambda_k]_{ii};\frac{\nu_i}{2},\frac{\nu_i}{2}\right)+c.\end{align}
Therefore, $q_\Lambda(\Lambda_{1:K})=\prod_{k=1}^K q_\Lambda(\Lambda_{k})$ where
\begin{align}
\log& q_\Lambda(\Lambda_k)= -\frac{1}{2}\E_{{q}_{xu}}[\tr\{(y_k-Cx_k-\Delta u_k)(y_k-Cx_k-\Delta u_k)^\t R^{-1}\Lambda_k\}]\nonumber\\
& - \frac{1}{2}\E_{{q}_{xu}}[\tr\{ u_k  u_k^\t{}\Lambda_k\}] + \sum_{i=1}^{n_y}\left( \frac{\nu_i}{2}\log[\Lambda_k]_{ii}- \frac{\nu_i}{2} [\Lambda_k]_{ii} \right)+c\\
=&-\frac{1}{2} \tr\left\{\left((y_k-Cx_{k|K}-\Delta u_{k|K})(y_k-Cx_{k|K}-\Delta u_{k|K})^\t+\begin{bmatrix}C&\Delta\end{bmatrix} Z_{k|K} \begin{bmatrix}C^\t\\\Delta^\t\end{bmatrix}\right) R^{-1} \Lambda\right\} \\
&-\frac{1}{2}\tr\left\{ (u_{k|K} u_{k|K}^\t+U_{k|K}) \Lambda_k \right\} + \sum_{i=1}^{n_y}\left( \frac{\nu_i}{2}\log[\Lambda_k]_{ii}- \frac{\nu_i}{2} [\Lambda_k]_{ii} \right)+c \\
=& \sum_{i=1}^{n_y} \left( \frac{\nu_i}{2} \log[\Lambda_k]_{ii} - \frac{\nu_i+[\Psi_k]_{ii}}{2}[\Lambda_k]_{ii} \right) +c 
\end{align}
where
\begin{equation}
\begin{split}
\Psi_k &= (y_k-Cx_{k|K}-\Delta u_{k|K})(y_k-Cx_{k|K}-\Delta u_{k|K})^\t R^{-1} + \begin{bmatrix}C&\Delta\end{bmatrix} Z_{k|K} \begin{bmatrix}C^\t\\\Delta^\t\end{bmatrix} R^{-1} +u_{k|K}u_{k|K}^\t+U_{k|K}. \label{eq:Filterpsi}
\end{split}
\end{equation}
Therefore,
\begin{align}
q_\Lambda&(\Lambda_k)=\prod_{i=1}^{n_y}\G\left([\Lambda_k]_{ii};\frac{\nu_i}{2}+1,\frac{\nu_i+[\Psi_k]_{ii}}{2}\right).
\end{align}
Note that only the diagonal elements of the matrix $\Psi_k$ are required. In the derivations of Section \ref{sec:Smootherq_xu}, $\Lambda_{k|K} \triangleq \E_{q_\Lambda}[\Lambda_k]$ is required. $\E_{q_\Lambda}[\Lambda_k]$ is a diagonal matrix with the diagonal elements
\begin{align}
[\Lambda_{k|K}]_{ii}=\frac{\nu_i+2}{\nu_i+[\Psi_k]_{ii}}.
\end{align}

%=============================================================================
%===============================  New SECTION  ===============================
%=============================================================================
\vfill
\section{Derivations for the Filter} \label{sec:filter}
%The filtering recursions are similar to those of the smoother given in section \ref{sec:smoother}. However, since the notation used in the filtering algorithm \cite[\note{Table X}]{nurminen2016arxiv} is different from the notation used for smoothing algorithm, the derivation for the filter will be given separately.

Suppose that at time index $k$ the measurement vector $y_k$ is available, and the prediction PDF $p(x_k|y_{1:k-1})$ is
\begin{align}
p(x_k|y_{1:k-1}) = \N(x_k;x_{k|k-1},P_{k|k-1}).
\end{align}
Then, using Bayes' theorem  the joint filtering posterior PDF is
\begin{align}
p(x_k,u_k,\Lambda_k|y_{1:k}) &\propto  p(y_k,x_k,u_k,\Lambda_k|y_{1:k-1})\\
&= p(y_k|x_k,u_k,\Lambda_k)\,p(x_k|y_{1:k-1})\,p(u_k|\Lambda_k)\,p(\Lambda_k)\\
&= \N(y_k;Cx_k+\Delta u_k,\Lambda_k^{-1}R)\, \N(x_k;x_{k|k-1},P_{k|k-1})\, \N_+(u_k; 0,\Lambda_k^{-1})\, \prod_{i=1}^{n_y}\G\left([\Lambda_k]_{ii};\frac{\nu_i}{2},\frac{\nu_i}{2}\right).
\end{align}
This posterior is not analytically tractable. We seek an approximation in the form
\begin{align}
\label{eq:Filterfactors}
p(x_k,&u_k,\Lambda_k|y_{1:k})\approx q_{xu}(x_k,u_k)\,q_\Lambda(\Lambda_k) .
\end{align}
The VB solutions $\hat{q}_{xu}$ and $\hat{q}_\Lambda$ can be obtained by cyclic iteration of
\begin{subequations}
\label{eq:FilterIterativeOptimization}
\begin{align}
\log {q}_{xu}(x_k,u_k) \leftarrow& \E_{{q}_{\Lambda}}[\log p(y_k,x_k,u_k,\Lambda_k|y_{1:k-1})]+c_{xu}\label{eq:FilterIterativeOptimizationxu}\\
\log {q}_{\Lambda}(\Lambda_k) \leftarrow& \E_{{q}_{xu}}[\log p(y_k,x_k,u_k,\Lambda_k|y_{1:k-1})]+c_{\Lambda}\label{eq:FilterIterativeOptimizationL}
\end{align}
\end{subequations}
where the expected values on the right hand sides of~\eqref{eq:FilterIterativeOptimization} are taken with respect to the current $q_{xu}$ and $q_\Lambda$, and $c_{xu}$ and $c_\Lambda$  are constants with respect to the variables $\left[\begin{smallmatrix}x_k\\u_k\end{smallmatrix}\right]$ and $\Lambda_k$,  respectively~\cite[Chapter 10]{Bishop2007}\cite{TzikasLG2008}. In sections \ref{sec:Filterq_xu} and \ref{sec:Filterq_L} the derivations for the variational solution \eqref{eq:FilterIterativeOptimization} are given. For brevity all constant values are denoted by $c$ in the derivations. The logarithm of the joint filtering posterior which is needed for the derivations is given by
%\begin{align}
%\log p(x_{1:K},u_{1:K},&\Lambda_{1:K},y_{1:K})=\log \N(x_1;x_{1|0},P_{1|0})+\sum_{l=1}^{K-1}\log\N(x_{l+1};Ax_l,Q) \nonumber\\
%&+\sum_{k=1}^K \left\{\log \N(y_k;Cx_k+\Delta u_k,\Lambda_k^{-1}R) +\log \N_+(u_k; 0,\Lambda_k^{-1}) +\sum_{i=1}^{n_y}\G\left([\Lambda_k]_{ii};\frac{\nu_i}{2},\frac{\nu_i}{2}\right)\right\}
%\end{align}
\begin{align}
\log p(y_k,x_k,u_k,\Lambda_k|y_{1:k-1})= &-\frac{1}{2}(y_k-Cx_k-\Delta u_k)^\t R^{-1}\Lambda_k(y_k-Cx_k-\Delta u_k)\nonumber\\
&-\frac{1}{2}(x_k-x_{k|k-1})^\t{}P_{k|k-1}^{-1}(x_k-x_{k|k-1})\nonumber\\
& - \frac{1}{2}u_k^\t{} \Lambda_k u_k + \sum_{i=1}^{n_y} \left( \frac{\nu_i}{2}\log[\Lambda_k]_{ii}- \frac{\nu_i}{2} [\Lambda_k]_{ii} \right) +c,\ u_k\geq0 ,
\end{align}
where $u_k\geq0$ means that every component of $u_k$ is non-negative.

%=============================================================================
%===============================  New SECTION  ===============================
%=============================================================================
\subsection{ Derivations for $q_{xu}$}
\label{sec:Filterq_xu}
Using equation \eqref{eq:FilterIterativeOptimizationxu} we obtain
\begin{align}
\log q_{xu}(x_k,u_k)=&-\frac{1}{2}\E_{q_\Lambda}[ (y_k-Cx_k-\Delta u_k)^\t{} R^{-1} \Lambda_k (y_k-Cx_k-\Delta u_k)]\nonumber\\
&-\frac{1}{2}(x_k-x_{k|k-1})^\t{}P_{k|k-1}^{-1}(x_k-x_{k|k-1}) - \frac{1}{2}\E_{q_\Lambda}[ u_k^\t\Lambda_k u_k ] +c\\
=& -\frac{1}{2}\left(y_k-\begin{bmatrix} C & \Delta \end{bmatrix} \begin{bmatrix} x_k\\u_k \end{bmatrix}\right)^\t R^{-1} \Lambda_{k|k} \left(y_k-\begin{bmatrix} C & \Delta \end{bmatrix} \begin{bmatrix} x_k\\u_k \end{bmatrix} \right) \\
&-\frac{1}{2}\left( \begin{bmatrix} x_k\\u_k \end{bmatrix} - \begin{bmatrix} x_{k|k-1}\\0 \end{bmatrix} \right)^\t \begin{bmatrix} P_{k|k-1}&0\\0&\Lambda_{k|k}^{-1}\end{bmatrix}^{-1} \left( \begin{bmatrix} x_k\\u_k \end{bmatrix} - \begin{bmatrix} x_{k|k-1}\\0 \end{bmatrix} \right),\ u_k\geq0,
%=&- \frac{1}{2}(y_k-Cx_k-\Delta \overline{u_k})^\t{}\overline{\Lambda_k} R^{-1}(y_k-Cx_k-\Delta \overline{u_k})\nonumber\\
%&-\frac{1}{2}(x_k-x_{k|k-1})^\t{}P_{k|k-1}^{-1}(x_k-x_{k|k-1}) +c \label{eq:Filterqxposterior},
\end{align}
where $\Lambda_{k|k}\triangleq\E_{q_\Lambda}[\Lambda_k]$ is derived in section \ref{sec:Filterq_L}. Hence, 
\begin{align}
q_{xu}(x_k,u_k)\propto&\  \N\left(y_k; \begin{bmatrix}C&\Delta\end{bmatrix} \begin{bmatrix}x_k\\u_k\end{bmatrix},\Lambda_{k|k}^{-1}R\right)\N\left(\begin{bmatrix}x_k\\u_k\end{bmatrix};\begin{bmatrix}x_{k|k-1}\\0\end{bmatrix},\begin{bmatrix} P_{k|k-1}&0\\0&\Lambda_{k|k}^{-1}\end{bmatrix}\right) \cdot [u_k\geq0],
\end{align}
where $[u_k\geq0]$ is the Iverson bracket. By Kalman filter's \cite{Kalman60} measurement update, this becomes
\begin{align}
q_{xu}&(x_k,u_k)=\frac{1}{C}\,\N\left(\begin{bmatrix}x_k\\u_k\end{bmatrix};z_{k|k}',Z_{k|k}'\right) \cdot [u_k\geq0],
\end{align}
where
\begin{align}
z_{k|k}'&= \begin{bmatrix}x_{k|k-1}\\0\end{bmatrix}+K_k(y_k-Cx_{k|k-1}),  \label{eq:Filteru_kneeded}\\
Z_{k|k}'&= (I-K_k\begin{bmatrix}C&\Delta\end{bmatrix})\begin{bmatrix} P_{k|k-1}&0\\0&\Lambda_{k|k}^{-1}\end{bmatrix},\\
K_k&=\begin{bmatrix} P_{k|k-1}C^\t \\ \Lambda_{k|k}^{-1}\Delta^\t \end{bmatrix} (CP_{k|k-1}C^\t+\Delta\Lambda_{k|k}^{-1}\Delta^\t+\Lambda_{k|k}^{-1}R)^{-1}.
\end{align}

The first and second moments $x_{k|k} \triangleq \E_{q_{xu}}[x_k]$, $u_{k|k} \triangleq \E_{q_{xu}}[u_k]$, $Z_{k|k} \triangleq \Var_{q_{xu}}\left[\begin{smallmatrix}x_k\\u_k\end{smallmatrix}\right]$, and $U_{k|k} \triangleq \Var_{q_{xu}}[u_k]$ are required in the derivation of $q_\Lambda$ in Section \ref{sec:Filterq_L}, and they can be approximated using the recursive truncation algorithm. For the linear--Gaussian time update to be analytically tractable, the marginal distribution $q_{xu}(x_k)$ is approximated as a normal distribution
\begin{align}
q_{xu}(x_k) = \int q_{xu}(x_k,u_k) \,\mathrm{d} u_k \approx \N(x_{k|k},P_{k|k}) ,
\end{align}
where $P_{k|k} \triangleq \var_{q_{xu}}[x_k]$.

%=============================================================================
%===============================  New SECTION  ===============================
%=============================================================================
\subsection{ Derivations for $q_\Lambda$}
\label{sec:Filterq_L}
Using equation \eqref{eq:FilterIterativeOptimizationL} we obtain
\begin{align}
\log q_\Lambda(\Lambda_k) =& -\frac{1}{2}\E_{{q}_{xu}}[\tr\{(y_k-Cx_k-\Delta u_k)(y_k-Cx_k-\Delta u_k)^\t R^{-1}\Lambda_k\}]\nonumber\\
& - \frac{1}{2}\E_{{q}_{xu}}[\tr\{ u_k  u_k^\t{}\Lambda_k\}] + \sum_{i=1}^{n_y}\left( \frac{\nu_i}{2}\log[\Lambda_k]_{ii}- \frac{\nu_i}{2} [\Lambda_k]_{ii} \right)+c\\
=&-\frac{1}{2} \tr\left\{\left((y_k-Cx_{k|k}-\Delta u_{k|k})(y_k-Cx_{k|k}-\Delta u_{k|k})^\t+\begin{bmatrix}C&\Delta\end{bmatrix} Z_{k|k} \begin{bmatrix}C^\t\\\Delta^\t\end{bmatrix}\right) R^{-1} \Lambda_k\right\} \\
&-\frac{1}{2}\tr\left\{ (u_{k|k} u_{k|k}^\t+U_{k|k}) \Lambda_k \right\} + \sum_{i=1}^{n_y}\left( \frac{\nu_i}{2}\log[\Lambda_k]_{ii}- \frac{\nu_i}{2} [\Lambda_k]_{ii} \right)+c \\
=& \sum_{i=1}^{n_y} \left( \frac{\nu_i}{2} \log[\Lambda_k]_{ii} - \frac{\nu_i+[\Psi_k]_{ii}}{2}[\Lambda_k]_{ii} \right) +c 
\end{align}
where
\begin{equation}
\begin{split}
\Psi_k &=  (y_k-Cx_{k|k}-\Delta u_{k|k})(y_k-Cx_{k|k}-\Delta u_{k|k})^\t R^{-1} + \begin{bmatrix}C&\Delta\end{bmatrix} Z_{k|k} \begin{bmatrix}C^\t\\\Delta^\t\end{bmatrix} R^{-1} + u_{k|k}u_{k[k}^\t+U_{k|k}\label{eq:Filterpsi}
\end{split}
\end{equation}
and the moments $x_{k|k} \triangleq \E_{q_{xu}}[x_k]$, $u_{k|k} \triangleq \E_{q_{xu}}[u_k]$, $Z_{k|k} \triangleq \Var_{q_{xu}}\left[\begin{smallmatrix}x_k\\u_k\end{smallmatrix}\right]$, and $U_{k|k} \triangleq \Var_{q_{xu}}[u_k]$ are derived in Section \ref{sec:Filterq_xu} of this report. Therefore,
\begin{align}
q_\Lambda&(\Lambda_k)=\prod_{i=1}^{n_y}\G\left([\Lambda_k]_{ii};\frac{\nu_i}{2}+1,\frac{\nu_i+[\Psi_k]_{ii}}{2}\right).
\end{align}
Note that only the diagonal elements of the matrix $\Psi_k$ are required. In the derivations of Section \ref{sec:Filterq_xu} $\Lambda_{k|k} \triangleq \E_{q_\Lambda}[\Lambda_k]$ is required. $\E_{q_\Lambda}[\Lambda_k]$ is a diagonal matrix with the diagonal elements
\begin{align}
[\Lambda_{k|k}]_{ii}=\frac{\nu_i+2}{\nu_i+[\Psi_k]_{ii}}.
\end{align}

%=============================================================================
%===============================  New SECTION  ===============================
%=============================================================================
%\vfill
%\bibliographystyle{IEEEtran}
%\bibliography{IEEEabrv,VB-Skewness}
%=============================================================================
%===============================  New SECTION  ===============================
%=============================================================================

%\rtLibrarypage
%=============================================================================
%===============================  New SECTION  ===============================
%=============================================================================
\vfill\clearpage
\bibliographystyle{IEEEtran}
\bibliography{IEEEabrv,VB-Skewness}

\end{document}